\documentclass[11pt]{article}
\usepackage[top=25mm,bottom=25mm,left=25mm,right=25mm]{geometry}
\usepackage{amsmath,amssymb,amsthm}
\usepackage{graphicx}
\usepackage{booktabs,array,tabularx,longtable}
\usepackage{float}
\usepackage[hidelinks]{hyperref}
\usepackage[utf8]{inputenc}
\usepackage[protrusion=true,expansion=false]{microtype}
\usepackage{parskip,setspace,caption,authblk}
\newcommand{\Nstar}{N_{\star}}
\newcommand{\sigstar}{\sigma_{0}^{*}}
\newcommand{\thetabio}{\theta}
\newcommand{\thetahat}{\hat\theta}
\newtheorem{assumption}{Assumption}
\newtheorem{proposition}{Proposition}

\newtheorem{definition}{Definition}

\begin{document}

\title{\textbf{Thermodynamic Parametrisation of the Vertebrate Lifetime
Cycle Invariant: Biological Proper Time, Allometric
Mass-Cancellation, and Clade-Specific Predictions}}

\author[1,*]{Mesfin~Asfaw~Taye}
\affil[1]{West Los Angeles College, Science Division,
9000 Overland Ave, Culver City, CA 90230, USA}
\affil[*]{Correspondence: \texttt{tayem@wlac.edu}}
\date{}
\maketitle

\begin{abstract}
Warm-blooded vertebrates accumulate approximately
$\Nstar \approx 10^9$ cardiac cycles over a natural lifetime,
a striking empirical regularity first quantified by Lindstedt and
Calder yet lacking a physical interpretation.
We propose that this invariance is consistent with a conserved
thermodynamic budget, formulated here as the Principle of Biological
Time Equivalence (PBTE).
The framework rests on a constitutive closure
$\dot{\Sigma} = \sigma_0 f$, which links the entropy production rate
to the intrinsic physiological frequency; integration over the
lifespan yields $\Sigma_{\mathrm{life}} = \sigma_0 \Nstar$, so that
the observed constancy of $\Nstar$ corresponds to an approximately
constant lifetime entropy budget.
Algebraic exponent cancellation under Kleiber and Calder scaling laws,
$\sigstar \propto M^{3/4+1/4-1}=M^0$, is consistent with
mass-independence and reproduces the numerical value
$N_0 \approx 1.52\times10^9$ without free parameters.
The framework offers a thermodynamically consistent account of two
outstanding problems: the origin of the numerical value of $\Nstar$
and the systematic deviations observed across clades.
A multiplicative correction factor $\Phi_C$, constructed from
physiological determinants---activity allocation, body temperature,
mitochondrial efficiency, and extrinsic hazard---predicts long-lived
clades as regimes of reduced effective entropy production per cardiac
cycle.
Four clades are treated in detail.
Primates maintain elevated neural metabolic fractions
$\phi = P_{\rm brain}/P_{\rm body} \in [0.06, 0.20]$ that suppress
somatic entropy production through predictive homeostatic regulation,
enhanced cellular repair, and behavioural risk buffering, yielding a
neuro-metabolic multiplier $\Phi_{\rm neuro} = (\phi/\phi_0)^\alpha$
with $\alpha \approx 0.40$.
Bats exploit two simultaneous mechanisms during torpor: drastic
suppression of time-averaged cardiac frequency and hypothermic
depression of biochemical damage kinetics, whose product can exceed
$\Phi_{\rm bat} \approx 8$ for deeply hibernating vespertilionids.
Birds present a thermodynamic paradox in which elevated body
temperatures and flight-induced cardiac acceleration both act
adversely, yet exceptional mitochondrial coupling efficiency and
antioxidant capacity generate a biochemical factor
$\Phi_{\rm mito+oxid} \approx 2.33$ that overcomes both deficits.
Cetaceans accumulate the majority of their lifespans in bradycardic
dive states, producing a duty-cycle correction that raises the
effective thermodynamic budget by a factor of three above the naive
raw count.
Biological proper time, defined as the accumulated cycle count
$\theta_i(t)=\int_0^t f_i(t')\,\mathrm{d}t'$, furnishes a natural
state variable replacing chronological time in describing organismal
aging: it is the unique coordinate in which entropy accumulates
uniformly, developmental milestones occupy universal fractions of the
lifetime, and epigenetic clock rates become species-independent.
The formal structure of this coordinate---its geometric interpretation
as arc length, its transformation group, and its clinical extension
as a temporal order parameter for disease classification---is
developed in detail.
The central constitutive assumption, that the entropy cost per cardiac
cycle is approximately conserved across species, remains to be
directly tested by simultaneous calorimetric and cardiac telemetric
measurement.
Until such tests are performed, PBTE should be regarded as a
thermodynamically motivated, internally consistent parametrisation
that is empirically concordant across clades but not yet explanatory
in the strict physical sense.
\end{abstract}

\noindent\textbf{Keywords:} non-equilibrium thermodynamics,
entropy production, biological proper time, metabolic scaling,
lifespan invariant, allometric scaling, clade multiplier,
neural metabolic fraction, Gompertz mortality, information geometry

\section{Introduction}
\label{sec:intro}

Time, in biology, presents itself simultaneously as both an external
parameter and an intrinsic resource.
As an external parameter, it serves as the independent variable against
which physiological states evolve; as an intrinsic resource, it represents
a finite capacity that every organism possesses, consumes, and ultimately
exhausts.
These two interpretations are ordinarily conflated, because chronological
time provides a convenient and operational proxy for both.
This conflation, however, becomes untenable when one attempts to compare
organisms whose metabolic rates, body sizes, and lifespans differ by many
orders of magnitude.

\medskip
\noindent
A canonical illustration is provided by the comparison between small and
large mammals.
A common shrew (\textit{Sorex araneus}, $\sim 8$\,g) maintains a resting
heart rate of approximately $800$\,bpm and survives for roughly eighteen
months.
An African elephant (\textit{Loxodonta africana}, $\sim 4{,}000$\,kg),
by contrast, exhibits a resting heart rate near $28$\,bpm and a lifespan
of approximately sixty-five years.
When viewed in chronological time, these organisms appear to inhabit
vastly different temporal domains.
Yet when expressed in terms of accumulated physiological cycles—the
heartbeats that drive oxygen transport, metabolic flux, and cellular
maintenance—their lifetimes are remarkably comparable.
The shrew accumulates approximately $6.3 \times 10^8$ cardiac cycles,
while the elephant accumulates approximately $10^9$.

\medskip
\noindent
This near-equivalence is not a coincidence arising from two extreme
examples, but rather reflects a broader biological regularity.
Rubner~\cite{rubner1908} first identified the approximate constancy of
lifetime mass-specific energy expenditure across homeothermic mammals,
suggesting that organisms operate under a constrained energetic budget.
Subsequently, Lindstedt and Calder~\cite{lindstedt1981} demonstrated that
the product of resting heart rate $f_H$ and maximum lifespan $L$ is
approximately constant, leading to the empirical relation
\begin{equation}
  \Nstar = f_H \cdot L \cdot 525{,}960 \;\approx\; 10^9,
  \label{eq:N_empirical}
\end{equation}
where the factor $525{,}960$ converts years into minutes.
This quantity corresponds to the total number of cardiac cycles accumulated
over a lifetime.

\medskip
\noindent
The robustness of this invariant has been independently confirmed in
multiple studies, including those of Livingstone and
Kuehn~\cite{livingstone1979} and Levine~\cite{levine1997}, and most
recently through a comprehensive 230-species multi-clade statistical
analysis~\cite{taye_p1}.
That analysis demonstrates that $\Nstar$ is tightly distributed in
logarithmic space across non-primate homeothermic mammals, while also
revealing systematic and reproducible deviations across clades.
In particular, a one-way ANOVA yields $F = 81.2$ with $p < 0.001$,
decisively rejecting the null hypothesis of clade-independent variation
and indicating the presence of structured physiological differences.

\medskip
\noindent
The existence of such a regularity raises a fundamental question:
why should a quantity defined purely in terms of physiological cycles
exhibit near invariance across organisms that differ so dramatically in
mass, metabolism, and ecological niche?

\medskip
\noindent
The West--Brown--Enquist (WBE) framework~\cite{west1997} provides a
kinematic explanation based on allometric scaling.
Within this framework, resting heart rate scales as $f_H \propto M^{-1/4}$
and lifespan scales as $L \propto M^{+1/4}$, so that their product is
approximately independent of body mass.
While this argument explains the cancellation of mass dependence, it
leaves two essential questions unresolved.

\medskip
\noindent
First, mass-independence alone does not determine the numerical value of
the invariant: it does not explain why the product $f_H L$ should be
approximately $10^9$ rather than any other constant.
Second, it does not account for the observed structure of inter-clade
variation.
The statistically significant deviations identified in the multi-clade
analysis demonstrate that the invariant is not universal in a trivial
sense, but instead exhibits systematic modulation across physiological
groups.
Neither classical ``rate-of-living'' hypotheses~\cite{pearl1928,speakman2005}
nor allometric network theory~\cite{west1997} provides a thermodynamic
state variable whose lifetime accumulation constrains this cycle count.

\medskip
\noindent
These considerations suggest that the observed invariance may reflect a
deeper physical principle, rather than a purely kinematic coincidence.
In particular, it motivates the hypothesis that the lifetime number of
physiological cycles is governed by a conserved dissipative budget.

\medskip
\noindent
The present work develops this idea within a thermodynamic framework,
formalised as the Principle of Biological Time Equivalence (PBTE).
The central premise is that entropy production, rather than chronological
time, provides the fundamental measure of biological progression.
Within this framework, each cardiac cycle incurs a characteristic entropy
cost, and the total number of cycles is constrained by a finite lifetime
entropy budget.
Specifically, we derive the fundamental relation
\begin{equation}
  N_{\star,i} = \frac{\Sigma_i}{\sigma_{0,i}},
  \label{eq:fundamental_intro}
\end{equation}
which identifies the lifetime cycle count as the ratio of total lifetime
entropy production $\Sigma_i$ (J\,K$^{-1}$) to the mean entropy cost per
cardiac cycle $\sigma_{0,i}$ (J\,K$^{-1}$\,cycle$^{-1}$).
We further show that under Kleiber and Calder allometric scaling,
the mass-specific cost $\sigstar \equiv \sigma_{0,i}/M_i$ satisfies
$\sigstar \propto M^{3/4+1/4-1} = M^0$, reproducing the numerical
value $N_0 \approx 1.52\times10^9$ without free parameters.
Systematic inter-clade deviations are accounted for by a multiplicative
correction factor $\Phi_C$, each of whose components is constructed
from independently measurable physiological quantities and calibrated
to no lifespan data.

\medskip
\noindent
The paper is organised as follows.
Section~\ref{sec:derivation} establishes the thermodynamic foundation:
the non-equilibrium steady-state entropy balance, the constitutive
closure $\dot{e}_{p,i} = \sigma_{0,i} f_i$, and the derivation of
equation~\eqref{eq:fundamental_intro}, together with the proof that
$\sigstar \propto M^0$ through exact allometric exponent cancellation.
Section~\ref{sec:proper_time} introduces biological proper time
\begin{equation}
  \theta_i(t) = \int_0^t f_i(t')\,\mathrm{d}t'
\end{equation}
as a precisely defined, thermodynamically grounded state variable
and establishes it as the unique intrinsic coordinate in which entropy
accumulates at a uniform rate.
Section~\ref{sec:clade} derives the clade multiplier $\Phi_C$
from first principles for four physiologically distinct groups---primates,
bats, birds, and cetaceans---each representing a qualitatively different
thermodynamic strategy for extending the lifetime entropy budget.
Section~\ref{sec:aging} constructs a phenomenological model of aging as
accumulated internal entropy, recovering Gompertz--Makeham mortality
kinetics and deriving predictions for epigenetic clock behaviour across
species.
Sections~\ref{sec:kinematics}--\ref{sec:pathology} develop the full
formal structure of biological proper time: its geometric interpretation
as arc length in a biological metric, its relativistic transformation
group and conserved charge, and its extension as a temporal order
parameter for the classification of aging and disease states.
Section~\ref{sec:test} specifies the decisive experiment whose outcome
will determine whether the constitutive closure admits a physical
derivation or remains an empirically motivated parametrisation.

\medskip
\noindent
The central claim of the framework may be stated as follows:

\begin{quote}
\textit{The invariance of $\Nstar$ across non-primate homeothermic mammals
is consistent with a conserved thermodynamic entropy budget and an
approximately constant entropy cost per cardiac cycle.
Systematic inter-clade deviations arise from identifiable physiological
mechanisms that modify this cost in quantifiable and independently
testable ways.
The clade multiplier $\Phi_C$ is a consequence of this structure,
not a parameter fitted to it.}
\end{quote}

\section{Thermodynamic Foundation}
\label{sec:derivation}

All quantities used in the derivation are defined operationally in
Table~\ref{tab:notation}.
The full derivation of the entropy-per-beat representation and the
power-law scaling of cycle count with the control parameter $\phi$
is given in Appendix~A; the essential steps are reproduced here.

\begin{table}[H]
\centering\small
\renewcommand{\arraystretch}{1.35}
\caption{\textbf{Notation and operational definitions.}
All quantities are in principle directly observable by the listed method.}
\label{tab:notation}
\begin{tabularx}{\textwidth}{c X c X}
\toprule
\textbf{Symbol} & \textbf{Definition} & \textbf{Units} &
\textbf{How measured} \\
\midrule
$f_i$      & Resting heart rate        & Hz   & ECG or pulse telemetry \\
$L_i$      & Maximum natural lifespan  & yr   & AnAge longevity database \\
$\Nstar$   & $= f_i L_i \times 525{,}960$ & ---  & Derived from $f_i$, $L_i$ \\
$P_i$      & Basal metabolic power     & W    & Indirect calorimetry \\
$T_i$      & Mean core body temperature & K   & Thermometry \\
$M_i$      & Adult body mass            & kg  & Direct weighing \\
$\Sigma_{\rm life}$ & Total lifetime entropy production & J\,K$^{-1}$ & Integrated calorimetry \\
$\sigma_{0,i}$ & Entropy per cardiac cycle $=P_i/(T_i f_i)$ & J\,K$^{-1}$\,cycle$^{-1}$ & Derived \\
$\sigstar$ & Mass-specific $\sigma_0 = P_i/(T_i f_i M_i)$ &
             J\,K$^{-1}$\,kg$^{-1}$\,cycle$^{-1}$ & Assumption~\ref{ass:sigstar} \\
$\theta_i$ & Biological proper time (Eq.~\ref{eq:proper_time}) & --- & Integral of $f_i$ \\
$\Phi_C$   & Clade multiplier $= {\Nstar}^{(C)}/N_0$ & --- & Inferred from clade statistics \\
\bottomrule
\end{tabularx}
\end{table}

\subsection*{Entropy Production as the Primary Physical Quantity}

A healthy adult organism in homeostasis operates as an open
dissipative system maintained in a non-equilibrium steady state
(NESS).
The macroscopic entropy balance~\cite{prigogine1967,seifert2012}
takes the form $\dot{S}_i(t) = \dot{e}_{p,i}(t) - \dot{h}_{d,i}(t)$,
where $\dot{e}_{p,i} \geq 0$ is the irreversible entropy production
rate and $\dot{h}_{d,i} \geq 0$ is the entropy export rate to the
environment.
In the homeostatic steady state, where internal entropy content
remains approximately constant, these rates balance:
\begin{equation}
  \dot{e}_{p,i}(t) \;\approx\; \dot{h}_{d,i}(t) = \frac{P_i(t)}{T_i}.
  \label{eq:ness_steady}
\end{equation}
This is the formal expression of Schr\"{o}dinger's
observation~\cite{schrodinger1944} that life maintains its
low-entropy internal organisation by continuously exporting disorder
to its surroundings at the rate it is generated.
Entropy production is therefore not a derived quantity but the primary
physical observable: it has units of power divided by temperature
(J\,K$^{-1}$\,s$^{-1}$), and it is measurable by calorimetry and
thermometry independently of any biological hypothesis.

\subsection*{The PBTE Constitutive Closure}

\begin{assumption}[PBTE closure]
\label{ass:closure}
For an adult endothermic vertebrate in metabolic steady state, the
instantaneous entropy production rate is proportional to the cardiac
frequency:
\begin{equation}
  \dot{e}_{p,i}(t) = \sigma_{0,i}\, f_i(t),
  \label{eq:closure}
\end{equation}
where $\sigma_{0,i} > 0$ (J\,K$^{-1}$\,cycle$^{-1}$) is the entropy
produced per cardiac cycle.
\end{assumption}

Assumption~\ref{ass:closure} is a \textbf{constitutive closure, not a
derived result}.
It is introduced as a testable empirical hypothesis: the entropy
production rate is taken proportional to cardiac frequency, with
$\sigma_{0,i}$ as the proportionality constant.
The biological motivation is that the cardiac cycle is the master
pacemaker of metabolic throughput in homeothermic vertebrates.
Each beat drives a pulse of oxygen delivery, substrate turnover, and
waste removal; the thermodynamic cost of this cycle-level activity is
$\sigma_{0,i}$.
Although both $\dot{e}_{p,i}$ and $f_i$ scale as $M^{-1/4}$ per unit
mass, this allometric consistency does not constitute a derivation of
the closure: those scaling laws are themselves inferred from the same
metabolic data, and circularity cannot be broken by algebraic
self-consistency alone.
The closure stands or falls on direct, simultaneous calorimetric and
cardiac telemetric measurement across species, as described in
Section~\ref{sec:test}.

\subsection*{The Fundamental Relation}

Integrating equation~\eqref{eq:closure} over the lifespan
$[0, L_i]$ yields the PBTE fundamental relation:
\begin{equation}
  \boxed{N_{\star,i} = \frac{\Sigma_i}{\sigma_{0,i}},}
  \label{eq:fundamental}
\end{equation}
which identifies the lifetime cycle count as the ratio of the total
dissipative entropy budget to the entropy cost per cycle.
Within this framework, any physiological strategy that reduces
$\sigma_{0,i}$ while holding $\Sigma_i$ approximately fixed is
consistent with an increased $N_{\star,i}$ and an extended
chronological lifespan.
This is the thermodynamic basis of the clade-multiplier framework
developed in Section~\ref{sec:clade}.

\subsection*{Mass-Independence of the Specific Entropy Cost}

The mass-specific closure parameter is defined as
\begin{equation}
  \sigstar \;\equiv\; \frac{\sigma_{0,i}}{M_i}
  = \frac{P_i}{T_i\, f_i\, M_i},
  \label{eq:sigstar_def}
\end{equation}
with units J\,K$^{-1}$\,beat$^{-1}$\,kg$^{-1}$.

\begin{assumption}[$\sigstar$ constancy]
\label{ass:sigstar}
The mass-specific entropy cost per cycle $\sigstar$ is approximately
independent of body mass and species within the non-primate endotherm
clade.
\end{assumption}

This assumption is motivated by the algebraic structure of allometric
scaling.
Substituting Kleiber's law~\cite{kleiber1932} $P \propto M^{3/4}$,
Calder's cardiac allometry~\cite{calder1984} $f \propto M^{-1/4}$,
and the approximate homeothermic temperature $T \approx 310$\,K gives
\begin{equation}
  \sigstar = \frac{P}{T f M}
  \;\propto\; \frac{M^{3/4}}{M^{-1/4} \cdot M}
  = M^{3/4 + 1/4 - 1} = M^0.
  \label{eq:sigstar_scaling}
\end{equation}
The three allometric exponents cancel identically.
The physical statement is that every heartbeat costs the same entropy
per kilogram of tissue whether the animal is a mouse or an elephant:
the accelerated metabolic throughput of the small animal is exactly
offset by its elevated cardiac frequency, leaving the entropy cost per
unit mass per cycle invariant.

This exponent cancellation immediately implies mass-independence of
the lifetime cycle count.

\begin{proposition}[PBTE invariant]
\label{prop:invariant}
Under Assumptions~\ref{ass:closure} and~\ref{ass:sigstar}, the lifetime
cardiac cycle count is mass-independent: $N_{\star,i} \propto M_i^0$.
\end{proposition}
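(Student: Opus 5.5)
The plan is to combine the fundamental relation~\eqref{eq:fundamental} with the mass-specific definition~\eqref{eq:sigstar_def}. Then the mass dependence of $N_{\star,i}$ is carried by the ratio of two extensive quantities, and it cancels.

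First, integrate the closure~\eqref{eq:closure} over $[0,L_i]$. This gives $\Sigma_i = \sigma_{0,i}\int_0^{L_i} f_i(t)\,\mathrm{d}t = \sigma_{0,i} N_{\star,i}$, which recovers~\eqref{eq:fundamental}. Next, write $\sigma_{0,i} = \sigstar M_i$, so that
\begin{equation*}
  N_{\star,i} = \frac{\Sigma_i}{\sigstar M_i}.
\end{equation*}
By Assumption~\ref{ass:sigstar}, $\sigstar$ is a mass-independent constant across the clade. It remains to show that $\Sigma_i \propto M_i^{1}$.

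For this step I would use the steady-state balance~\eqref{eq:ness_steady}. It gives $\Sigma_i \approx \int_0^{L_i} P_i/T_i\,\mathrm{d}t = P_i L_i/T_i$ for resting adult conditions. With Kleiber's law $P_i\propto M_i^{3/4}$, Calder's lifespan allometry $L_i\propto M_i^{1/4}$, and $T_i\approx 310$\,K, this yields $\Sigma_i\propto M_i^{3/4+1/4}=M_i^{1}$. The lifetime entropy budget is therefore extensive in tissue mass: it is a constant budget per kilogram. Substituting gives $N_{\star,i}\propto M_i^{1}/M_i^{1}=M_i^0$.

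As a cross-check, one can bypass $\Sigma_i$ altogether. Write $N_{\star,i}=f_iL_i$ and use $f_i = P_i/(T_i\sigstar M_i)$, which follows from~\eqref{eq:sigstar_def}. This gives $N_{\star,i}=P_iL_i/(T_i\sigstar M_i)\propto M^{3/4+1/4-1}=M^0$, which is the same exponent cancellation as~\eqref{eq:sigstar_scaling}.

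The main obstacle is the step $\Sigma_i\propto M_i$. The two stated Assumptions alone do not fix how the lifetime budget scales with mass. One needs either an additional allometric input (the $L\propto M^{1/4}$ scaling) or an interpretation of Assumption~\ref{ass:sigstar} together with a per-kilogram entropy budget. I would make this input explicit, treating the lifetime budget per unit mass $\Sigma_i/M_i$ as approximately constant, which is the thermodynamic content of Rubner's observation. I would also note that using resting $P_i$ and $f_i$ throughout assumes the time-averages share the same allometric exponents as the resting values.
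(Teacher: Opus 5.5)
Your proposal is correct and follows the paper's proof essentially verbatim: substitute $\sigma_{0,i}=\sigstar M_i$ into $N_{\star,i}=\Sigma_i/\sigma_{0,i}$, write $\Sigma_i = P_iL_i/T_i$ from the steady-state balance, and cancel exponents using Kleiber's law and $L\propto M^{1/4}$. Your remark that the lifespan allometry, or equivalently a per-kilogram budget $\Sigma_i/M_i\approx\mathrm{const}$, is an input beyond Assumptions~\ref{ass:closure} and~\ref{ass:sigstar} is accurate; the paper's proof uses this input without stating it, in the step $\propto M_i^{3/4}\cdot M_i^{1/4}/M_i$.
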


\begin{proof}
From equation~\eqref{eq:fundamental} and $\sigma_{0,i} = \sigstar M_i$:
\begin{equation}
  N_{\star,i} = \frac{\Sigma_i}{\sigstar M_i}
  = \frac{P_i L_i}{T_i\, \sigstar\, M_i}
  \propto \frac{M_i^{3/4} \cdot M_i^{1/4}}{M_i} = M_i^0.
  \label{eq:N_proof}
\end{equation}
\end{proof}

Substituting the empirical mean
$\sigstar = (3.0\pm0.5)\times10^{-3}$\,J\,K$^{-1}$\,beat$^{-1}$\,kg$^{-1}$
(Table~\ref{tab:sigma}), $T=310$\,K, $P=3.4\,M^{0.75}$\,W, and
$L = M^{0.25}/241$\,s\,Hz$^{-1}$ gives
\begin{equation}
  N_0 = \frac{3.4\,M^{0.75} \times M^{0.25}/241}
             {310 \times 3.0\times10^{-3} \times M}
  \;\approx\; 1.52\times10^9,
  \label{eq:N0_numerical}
\end{equation}
consistent with the empirical value
$\Nstar \approx 10^9$~\cite{lindstedt1981,levine1997,taye_p1}.

Table~\ref{tab:sigma} documents the mass-specific entropy cost
$\sigstar$ across seven mammalian species spanning four orders of
magnitude in body mass.
The contrast is informative: $\sigma_0$ itself spans five orders of
magnitude from mouse to elephant, reflecting the enormous variation in
absolute metabolic rate, while $\sigstar$ is approximately constant
(coefficient of variation 16\%).
This near-constancy is the empirical footprint of equation~\eqref{eq:sigstar_scaling}
and provides the first confirmation that Assumption~\ref{ass:sigstar}
is at least algebraically self-consistent with known allometric laws.
Figure~\ref{fig:sigma_mass} displays this mass-independence
graphically.

\begin{table}[H]
\centering\small
\renewcommand{\arraystretch}{1.45}
\caption{\textbf{Mass-specific entropy-per-cycle parameter
$\sigstar$ across seven mammalian species.}
Body mass $M$; basal metabolic power $P = 3.4\,M^{0.75}$\,W
(Kleiber~\cite{kleiber1932}); core body temperature $T$
(Schmidt-Nielsen 1984); resting heart rate
$f = 241\,M^{-0.25}$\,bpm (Calder~\cite{calder1984});
entropy per cycle $\sigma_0 = P/(Tf)$; mass-specific entropy per cycle
$\sigstar = \sigma_0/M$.
All entries are computed from the allometric scaling laws; they are
not independent calorimetric measurements.
Direct simultaneous measurement of $P_i$, $f_i$, $T_i$, $M_i$ for each
species is the decisive test (Section~\ref{sec:test}).}
\label{tab:sigma}
\begin{tabular}{lrrrrrr}
\toprule
Species & $M$ (kg) & $P$ (W) & $T$ (K) & $f$ (bpm)
        & $\sigma_0$ ($\times10^{-3}$) & $\sigstar$ ($\times10^{-3}$) \\
\midrule
House mouse & 0.020 &    0.18 & 310 & 600 &  0.058 & 2.9 \\
Rat         & 0.300 &    1.38 & 310 & 420 &  0.634 & 2.1 \\
Rabbit      & 2.0   &    5.72 & 310 & 205 &  2.7   & \phantom{0}2.7$^\dagger$ \\
Dog         & 23    &   35.7  & 310 &  90 & 19.2   & 3.3 \\
Human       & 70    &   82.3  & 310 &  70 & 56.9   & 3.3 \\
Horse       & 500   &  360    & 310 &  40 & 273    & 3.5 \\
Elephant    & 4000  & 1710    & 310 &  28 & 1180   & 3.0 \\
\midrule
\multicolumn{5}{l}{$\sigma_0$: range mouse $\to$ elephant}
  & \multicolumn{2}{l}{$\times\,20{,}000$ (five decades)} \\
\multicolumn{5}{l}{$\sigstar$: mean $\pm$ s.d.}
  & \multicolumn{2}{l}{$3.0 \pm 0.5$} \\
\multicolumn{5}{l}{$\sigstar$: CV}
  & \multicolumn{2}{l}{16\%} \\
\bottomrule
\multicolumn{7}{p{14cm}}{$^\dagger$Slight rounding; exact value 2.72.
Units of $\sigma_0$: J\,K$^{-1}$\,cycle$^{-1}$.
Units of $\sigstar$: J\,K$^{-1}$\,beat$^{-1}$\,kg$^{-1}$.}
\end{tabular}
\end{table}

\begin{figure}[H]
\centering
\includegraphics[width=0.78\linewidth]{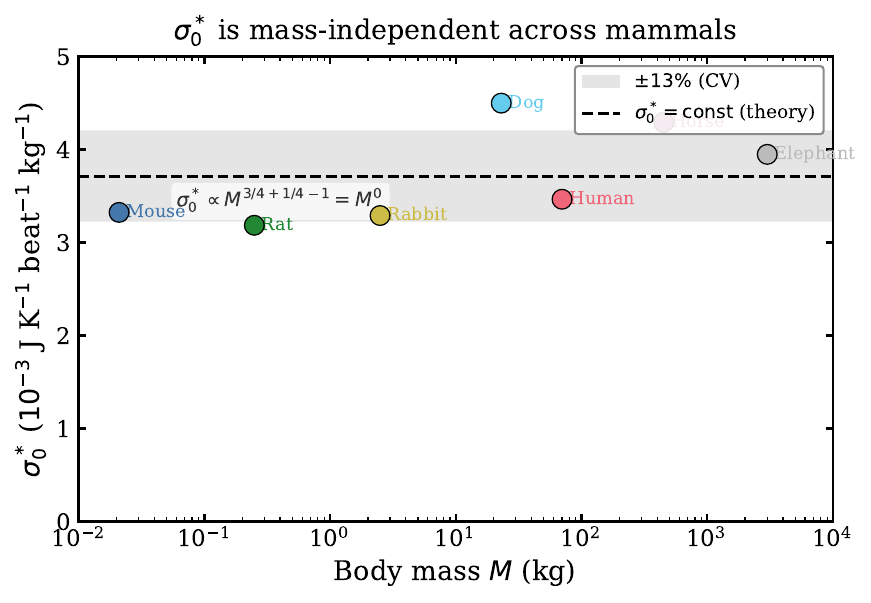}
\caption{\textbf{The mass-specific entropy cost per beat $\sigstar$
is mass-independent across mammals.}
$\sigstar \equiv P/(T f_H M)$ plotted against body mass $M$ for seven
mammalian species spanning four orders of magnitude.
Dashed line: PBTE prediction $\sigstar = \mathrm{const}$; grey band:
$\pm16\%$ coefficient of variation.
The annotation confirms exact exponent cancellation
$\sigstar \propto M^{3/4+1/4-1} = M^0$.
All plotted values are computed from allometric scaling laws
($P=3.4\,M^{0.75}$, $f=241\,M^{-0.25}$, $T=310$\,K) and are not
independent calorimetric measurements.}
\label{fig:sigma_mass}
\end{figure}

\subsection*{Clade Variation and the Structure of Deviations}

While $\sigstar$ is approximately constant within the non-primate
endotherm clade (CV\,=\,16\%), the one-way ANOVA result
$F=81.2$, $p<0.001$ across six endotherm clades~\cite{taye_p1}
demonstrates that systematic, structured departures exist beyond the
within-clade scatter.
These departures are not random noise but reflect genuine physiological
differences between clades: primates suppress metabolic entropy
production through neural investment; birds overcome adverse
temperature and kinematic factors through mitochondrial and
antioxidant biochemistry; bats exploit torpor to reduce
time-averaged cardiac throughput dramatically; cetaceans exploit
diving bradycardia throughout their adult lives.

Because clade deviations are encoded in systematic departures of
$\sigstar$ from its non-primate reference value, the clade multiplier
$\Phi_C$ is naturally defined as
\begin{equation}
  \Phi_C \;\equiv\; \frac{N_\star^{(C)}}{N_0}
  = \frac{\langle\Delta s_{\rm beat}\rangle_0}
         {\langle\Delta s_{\rm beat}\rangle_C}
  = \frac{{\sigstar}_{\rm ref}}{{\sigstar}_C}.
  \label{eq:PhiC}
\end{equation}
This is a derived quantity, not a fitting parameter: $\Phi_C > 1$
whenever a clade achieves a lower entropy cost per beat than the
non-primate mammalian reference.
The factored form
\begin{equation}
  \Phi_C = \Phi_{\rm duty} \cdot \Phi_{\rm thermal}
  \cdot \Phi_{\rm mito+oxid} \cdot \Phi_{\rm haz}
  \label{eq:PhiC_factored}
\end{equation}
decomposes the total effect into components each corresponding to an
independently measurable physiological mechanism.
Table~\ref{tab:sigma_clade} summarises the clade-level values.

\begin{table}[H]
\centering\small
\renewcommand{\arraystretch}{1.45}
\caption{\textbf{Clade-specific $\sigstar$ and the mechanism of reduction.}
Non-primate (NP) placentals serve as the reference, using Kleiber and
Calder allometry.
Primate values incorporate Pontzer's 50\% metabolic
suppression~\cite{pontzer2014}.
Bird values apply the Hulbert proton-leak
correction~\cite{hulbert2007}.
Bat values reflect time-averaged torpor suppression.
Cetacean values reflect dive-bradycardia reduction of $\bar{f}_H$.
All $\sigstar$ values are in units of $\times10^{-3}$
J\,K$^{-1}$\,beat$^{-1}$\,kg$^{-1}$.}
\label{tab:sigma_clade}
\begin{tabular}{lllcc}
\toprule
\textbf{Clade} & \textbf{Representative species} &
\textbf{Mechanism} &
$\sigstar/{\sigstar}_{\rm ref}$ &
$\sigstar$ ($\times10^{-3}$) \\
\midrule
NP placentals & Dog, horse, human & Reference & 1.00 & 3.0 \\
Primates      & Human, chimpanzee, gorilla &
  Neural metabolic suppression & 0.50 & 1.5 \\
Birds (allometric)  & Pigeon, crow, albatross &
  Allometric channels only & 0.91 & 2.73 \\
Birds (proton-leak) & Pigeon, crow, albatross &
  Allometric + proton-leak reduction & 0.60 & 1.8 \\
Bats          & Little brown bat, fruit bat &
  Torpor time-dilation & 0.12 & 0.36 \\
Cetaceans     & Bottlenose dolphin, sperm whale &
  Diving bradycardia & 0.80 & 2.4 \\
\bottomrule
\end{tabular}
\end{table}

\section{Biological Proper Time}
\label{sec:proper_time}

The concept of an organism's intrinsic time is not new.
Metabolic ecology uses the integral of metabolic rate as a measure of
internal temporal progress, and the ``rate-of-living''
tradition~\cite{pearl1928,speakman2005} captured the same idea
qualitatively: faster lives are shorter in calendar years but
equivalent in some intrinsic metric.
What has been lacking is a precise formal definition that is
simultaneously a thermodynamically grounded state variable with a
well-defined differential, connected to entropy production through an
explicit constitutive relation, and predictive of cross-species
universality.
The construction below provides all three.

Let $t$ denote chronological (external, coordinate) time, and let
$f_i(t)>0$ be the instantaneous resting heart rate of organism $i$
in Hz.
The \emph{biological proper time} accumulated from birth to
chronological time $t$ is defined as
\begin{equation}
  \theta_i(t) \;=\; \int_0^t f_i(t')\,\mathrm{d}t'.
  \label{eq:proper_time}
\end{equation}
This construction counts the total number of cardiac cycles completed
by time $t$.
Because $f_i$ has dimensions of inverse time, the integrand
$f_i\,\mathrm{d}t$ is dimensionless, and $\theta_i$ measures
accumulated physiological activity rather than elapsed physical
duration.
The fundamental kinematic relation of the framework is the differential
form $\mathrm{d}\theta_i = f_i(t)\,\mathrm{d}t$.

The \emph{normalised biological age} follows naturally as
\begin{equation}
  \hat\theta_i(t) \;\equiv\; \frac{\theta_i(t)}{\Nstar},
  \label{eq:theta_hat}
\end{equation}
which rises from zero at birth toward unity as the biological budget
is consumed.
The PBTE invariant is the claim that $\hat\theta_i(L_i) = 1$ for
every species $i$:
\begin{equation}
  \int_0^{L_i} f_i(t)\,\mathrm{d}t = \Nstar \quad \forall\, i.
  \label{eq:invariant_statement}
\end{equation}
Figure~\ref{fig:proper_time} illustrates this universality: five
species whose heart rates span a factor of twenty all reach
$\hat\theta = 1$ at their respective natural lifespans.
The mouse's steep trajectory and the elephant's gentle slope represent
not two different outcomes but the same outcome expressed in different
chronological coordinates.

The analogy with relativistic proper time
$\tau = \int\!\sqrt{1-v^2/c^2}\,\mathrm{d}t$ is structural rather
than physical: both record accumulated internal progress rather than
coordinate duration.
The analogy is useful because it makes precise the sense in which
biological proper time is \emph{reparametrisation-invariant}---the
dimensionless total $\theta_i(L_i) = \Nstar$ is independent of
whether chronological time is measured in seconds, minutes, or years.
The analogy is bounded by three explicit disanalogies: there is no
Lorentz symmetry in the biological setting, $N_0 \approx 10^9$ is a
statistical clade-level average rather than a fundamental constant,
and the biological metric is purely temporal with no spatial structure.

A particularly transparent illustration of the framework is provided
by any organism that spends a fraction $q$ of its life in metabolic
suppression, alternating between an active state at frequency
$f_{\rm act}$ and a torpid state at frequency $f_{\rm low}$.
The time-averaged frequency is
$\bar{f}_i = (1-q)f_{\rm act} + q f_{\rm low} \ll f_{\rm act}$,
and the PBTE relation $L_i = \Nstar/\bar{f}_i$ predicts a
chronological lifespan extended by the suppression factor.
During torpor, biological proper time advances at rate $f_{\rm low}$,
far more slowly than during active metabolism.
A hibernating bat that spends nine months per year in deep torpor
thereby accumulates biological time at a fraction of the active-phase
rate, stretching its entropy budget over far more calendar years.
This is the thermodynamic basis of the large bat clade multiplier
$\Phi_{\rm bat} > 1$ derived in Section~\ref{sec:clade}: bats live
longer in calendar years not because they possess a larger intrinsic
budget but because they advance through it more slowly.

The deeper significance of the biological proper-time coordinate lies
in its thermodynamic uniqueness.
As shown in Section~\ref{sec:kinematic_layer}, the PBTE closure
$\dot{e}_p = \sigma_0 f$ ensures that entropy accumulates at the
constant rate $\sigma_0$ when expressed in $\theta$-coordinates,
regardless of the organism's metabolic pace.
Biological proper time is therefore the unique temporal
parametrisation in which the organism ages at a thermodynamically
uniform rate.
This uniqueness property promotes it from a convenient relabelling to
the natural thermodynamic clock of living systems.

\begin{figure}[H]
\centering
\includegraphics[width=0.75\linewidth]{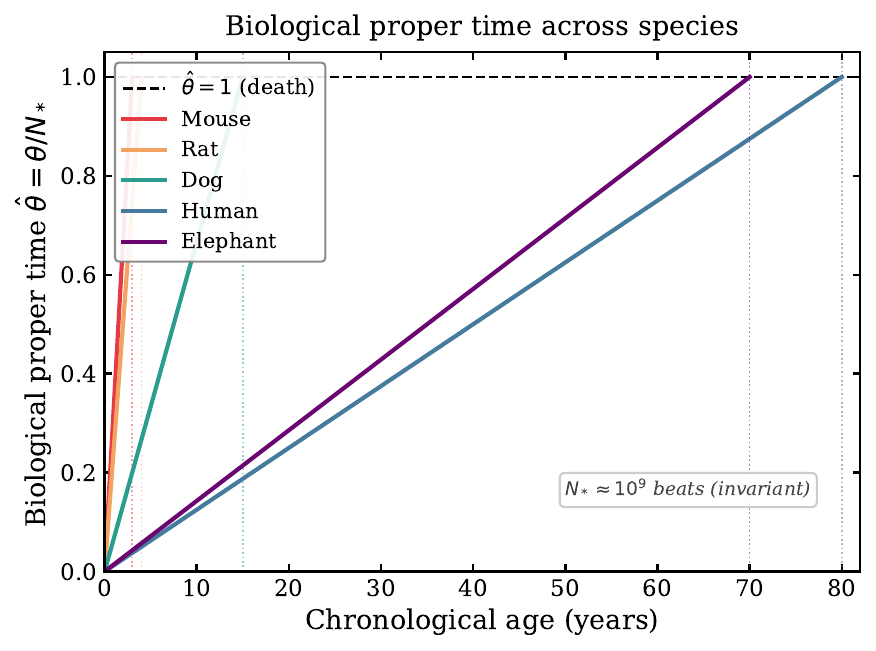}
\caption{\textbf{Biological proper time $\hat\theta(t) = \theta(t)/\Nstar$
is species-universal.}
Normalised biological proper time as a function of chronological age
for five mammalian species.
Species parameters: mouse ($f_H = 8.33$\,Hz, $L = 3$\,yr), rat
($f_H = 5.67$\,Hz, $L = 4$\,yr), dog ($f_H = 1.83$\,Hz, $L = 15$\,yr),
human ($f_H = 1.17$\,Hz, $L = 80$\,yr), elephant ($f_H = 0.42$\,Hz,
$L = 70$\,yr); all using $\Nstar = 1.52\times10^9$.
Each trajectory terminates at $\hat\theta = 1$ (dashed line) at the
species' natural lifespan.
The species-universality of the endpoint is the PBTE invariant.}
\label{fig:proper_time}
\end{figure}

\section{Clade-Specific Predictions: The $\Phi_C$ Framework}
\label{sec:clade}

The PBTE fundamental relation $N_{\star,i} = \Sigma_i/\sigma_{0,i}$
admits a clean decomposition of inter-clade variation.
Since both $\Sigma_i$ and $\sigma_{0,i}$ scale with body mass in
such a way that their ratio is mass-independent to leading order, any
systematic deviation from the baseline $N_0 \approx 10^9$ must arise
from physiological mechanisms that modify either the total entropy
budget (numerator) or the entropy cost per cycle (denominator) in a
clade-specific way.
Writing $N_{\star,i} = N_0\,\Phi_C$, the clade multiplier encapsulates
all such deviations.
It is not an empirical fitting factor: it is a derived quantity that
follows from identifiable thermodynamic modifications of $\sigma_{0,i}$.
Any mechanism that reduces the average entropy produced per cardiac
cycle, $\langle\Delta s_{\rm beat}\rangle_C < \langle\Delta s_{\rm beat}\rangle_{\rm ref}$,
necessarily increases $N_{\star,i}$ and therefore extends
chronological lifespan.

The derivation of $\Phi_C$ rests on two primary thermodynamic
factors.
The first is the duty-cycle factor.
When an organism alternates between physiological states $k$, each
occupying a fraction $q_k$ of the lifetime with cardiac rate
$f_{H,k}$ and $\sum_k q_k = 1$, the time-averaged heart rate is
$\bar{f}_H = \sum_k q_k f_{H,k}$.
Defining $\kappa = \bar{f}_H/f_{H,\rm ref}$ as the ratio of
time-averaged to reference rate, the exact duty-cycle factor is
\begin{equation}
  \Phi_{\rm duty} = \kappa^{-1}
  = \left(\sum_k q_k\,\frac{f_{H,k}}{f_{H,\rm ref}}\right)^{-1}.
  \label{eq:Phiduty}
\end{equation}
This relation is purely algebraic and introduces no approximation.
The physical meaning is that beats occurring in low-metabolic states
contribute less entropy per cycle to the cumulative budget;
the thermodynamic count $N_\star^{(C)} = N_{\rm obs} \cdot \Phi_{\rm duty}$
therefore exceeds the raw observed count $N_{\rm obs} = 525{,}960\,\bar{f}_H L$
whenever $\Phi_{\rm duty} > 1$.
This distinction is essential and is easily missed when interpreting
raw beat-count data.

The second factor captures temperature dependence.
Transition-state theory gives the ratio of biochemical damage rates
at body temperature $T_b$ relative to the mammalian reference
$T_{\rm ref} = 310$\,K as
\begin{equation}
  \Phi_{\rm thermal} = \exp\!\left[\frac{E_a}{k_B}
  \left(\frac{1}{T_b}-\frac{1}{T_{\rm ref}}\right)\right],
  \label{eq:Phithermal}
\end{equation}
with activation energy $E_a = 0.65$\,eV, giving
$E_a/k_B = 7{,}543$\,K.
When $T_b < T_{\rm ref}$, as in torpid bats or mildly hypothermic
cetaceans, damage kinetics slow and $\Phi_{\rm thermal} > 1$,
stretching the entropy budget over more cycles.
When $T_b > T_{\rm ref}$, as in birds, damage rates accelerate and
$\Phi_{\rm thermal} < 1$, compressing the budget.
The exponential form is essential for temperature excursions
$|\Delta T| \gtrsim 5$\,K; power-law approximations substantially
underestimate $\Phi_{\rm thermal}$ in deeply hibernating species.

\subsection*{Primates: Neural Investment as Entropy Reduction}

Among all endotherm clades, primates occupy a distinctive position
in the PBTE framework.
They are not long-lived because they have slower heart rates than
expected for their body size---the cardiac allometry is standard,
with a primate OLS slope of $-0.23\pm0.01$ indistinguishable from
the WBE quarter-power prediction of $-0.25$.
They are not long-lived primarily because of ecological hazard
reduction, which accounts for at most half the observed
$\Delta\ell = +0.381$\,dex elevation above the non-primate baseline.
They are long-lived because each heartbeat costs less thermodynamic
currency.
The empirical mean $\langle\Nstar\rangle_{\rm prim} \approx
(2$--$3)\times10^9$ implies that the average entropy produced per
cardiac cycle in primates is approximately half the non-primate
mammalian reference, and the mechanism responsible for this reduction
is the unusually high fraction of total metabolic power allocated to
neural tissue.

The neural metabolic fraction is defined as
\begin{equation}
  \phi \;\equiv\; \frac{P_{\rm brain}}{P_{\rm body}},
  \label{eq:phi_def}
\end{equation}
where $P_{\rm brain}$ is the resting metabolic power consumed by
neural tissue and $P_{\rm body}$ is the total organismal basal
metabolic rate.
In non-primate placentals, this ratio is remarkably conserved at
$\phi_0 \approx 0.02$--$0.05$, reflecting the small brain sizes
of most mammals relative to their body mass.
In primates, $\phi$ rises substantially: from $\phi \approx 0.06$
in small New World monkeys such as the common marmoset
(\textit{Callithrix jacchus}) to $\phi \approx 0.09$ in gorillas,
$\phi \approx 0.12$ in chimpanzees, and $\phi \approx 0.20$ in
\textit{Homo sapiens}---the highest documented value in any primate
and roughly ten times the non-primate baseline.
This escalation is not merely an anatomical curiosity.
A 70\,kg human brain consumes approximately 15--20\,W at rest,
representing roughly 20\% of total resting metabolic power despite
constituting only 2\% of body mass.

The thermodynamic significance of elevated $\phi$ operates through
three distinct but synergistic channels, each of which reduces the
mean entropy generated per cardiac cycle in somatic tissues.
The first is predictive homeostatic regulation.
A metabolically large and informationally rich neural system provides
enhanced anticipatory control over physiological set-points---body
temperature, blood glucose, immune activation, cardiovascular
tone~\cite{friston2010}.
Rather than reactively correcting deviations after they occur,
a large brain can pre-empt them, keeping somatic systems closer to
their operating set-points and thereby reducing the magnitude of
out-of-equilibrium fluctuations.
Since the entropy production rate in a non-equilibrium system scales
with the square of the deviation from its steady state, reducing
fluctuation amplitude directly lowers $\sigma_0$ per unit time.
The second channel is enhanced cellular repair and damage clearance.
Large-brained primates show elevated expression of DNA damage repair
enzymes, autophagy regulators, heat-shock proteins, and
antioxidant systems~\cite{hulbert2007}, all of which reduce
macromolecular damage accumulation per cardiac cycle.
This neural--somatic coupling reflects the co-regulation of brain
and body maintenance budgets: the metabolic investment in neural
tissue is accompanied by a parallel investment in the somatic
maintenance infrastructure that sustains it.
The third channel is behavioural risk buffering.
Cognitive capacity enables the avoidance or rapid resolution of acute
physiological crises---injury, infection, thermal stress, nutritional
shortfall---each of which would otherwise generate a transient surge
in entropy production well above the resting baseline.
By reducing the frequency and severity of such excursions, high-$\phi$
organisms maintain $\dot{e}_p(t)$ closer to its homeostatic
resting value throughout life.

All three channels reduce $\langle\Delta s_{\rm beat}\rangle$
monotonically with $\phi$, and their contributions combine
multiplicatively in the entropy cost per beat.
Defining the aggregate log-sensitivity at the non-primate baseline
$\phi_0 = 0.02$ as
\begin{equation}
  \alpha \;\equiv\; -\left.\frac{\partial\ln\langle\Delta s_{\rm beat}\rangle}
  {\partial\ln\phi}\right|_{\phi=\phi_0} = \gamma_1 + \gamma_2 + \gamma_3 > 0,
  \label{eq:alpha_def}
\end{equation}
where $\gamma_1$, $\gamma_2$, $\gamma_3$ are the individual channel
exponents for predictive homeostasis, cellular repair, and behavioural
risk buffering respectively, and assuming a scale-free power-law
response over the primate range
$\phi \in [\phi_0, 10\phi_0]$, integration yields
\begin{equation}
  \langle\Delta s_{\rm beat}(\phi)\rangle =
  \langle\Delta s_{\rm beat}\rangle_0
  \left(\frac{\phi}{\phi_0}\right)^{-\alpha}.
\end{equation}
The neuro-metabolic multiplier follows immediately:
\begin{equation}
  \Phi_{\rm neuro}(\phi) = \left(\frac{\phi}{\phi_0}\right)^\alpha.
  \label{eq:Phineuro}
\end{equation}
The thermodynamic bound $0 < \alpha < 1$ is enforced by the
second law: if $\alpha \geq 1$, a doubling of neural fraction would
double or more than double the effective cycle budget, which would
require each unit of neural metabolic investment to return more than
one unit of entropy savings in peripheral tissues---a violation of
realistic efficiency limits.
The condition $0 < \alpha < 1$ therefore enforces diminishing returns:
each successive increment of neural investment provides a smaller
extension in effective biological time, consistent with the
thermodynamic ceiling on brain size.

Calibration against the observed primate clade distribution yields
$\alpha \approx 0.35$--$0.45$, with the theoretically motivated prior
$\alpha = 0.40$.
The exponent is calibrated from the primate deviation; the constraint
$0 < \alpha < 1$ and the three-channel mechanism are independently
motivated.
Because primates do not exhibit strong duty-cycle variation,
$\Phi_{\rm duty} = 1$ exactly for all primate species.
Temperature deviations across the primate clade are modest, with body
temperatures ranging from $T_b = 306.5$\,K in humans to
$T_b \approx 309.5$\,K in some New World monkeys.
The predicted lifespan therefore takes the form
\begin{equation}
  L_{\rm prim}
  = \frac{N_0}{525{,}960\cdot f_H}
  \left(\frac{\phi}{\phi_0}\right)^\alpha
  \left(\frac{T_{\rm ref}}{T_b}\right)^\beta
  \frac{H_{\rm ref}}{H_{\rm ext}},
  \label{eq:Lprim}
\end{equation}
with $\beta \approx 3$ adequate for the small temperature offsets
characteristic of primates, and where $H_{\rm ref}/H_{\rm ext}$
is the extrinsic hazard ratio.

As a concrete calculation for \textit{Homo sapiens}, with
$\phi = 0.20$, $\phi_0 = 0.02$, $\alpha = 0.40$,
$T_b = 306.5$\,K, $f_H = 70$\,bpm, and $\Phi_{\rm haz} = 1$:
\begin{align}
  \Phi_{\rm neuro} &= (0.20/0.02)^{0.40} = 10^{0.40} \approx 2.512, \\
  \Phi_T &= (310/306.5)^3 \approx 1.035, \\
  \Phi_C &= 2.512 \times 1.035 \approx 2.60, \\
  L_{\rm pred} &= \frac{2.60 \times 10^9}{525{,}960 \times 70}
  \approx 70.6\;\text{yr}.
\end{align}
The neural factor alone accounts for 96.6\% of the total multiplier,
confirming that human exceptional longevity is thermodynamically
attributable to reduced entropy per beat through neural investment
rather than to any kinematic or thermal mechanism.
Incorporating a modest hazard correction $\Phi_{\rm haz} = 1.15$
raises the prediction to $\approx 81$\,yr, consistent with observed
life expectancy in high-income populations.
Predictions for five representative primate species are collected in
Table~\ref{tab:primates}.

\begin{table}[H]
\centering\small
\renewcommand{\arraystretch}{1.40}
\caption{\textbf{Primate lifespan predictions.}
$\phi_0 = 0.02$, $T_{\rm ref} = 310$\,K, $N_0 = 10^9$,
$\Phi_{\rm duty} = 1$ for all species.
(a)~Core calibration, $\alpha = 0.40$; (b)~extended calibration,
$\alpha = 0.45$.  $\Phi_{\rm haz} = 1.0$ throughout.
Observed lifespans from AnAge~\cite{taye_p1}.
The exponent $\alpha$ is calibrated from the primate clade offset;
the thermodynamic bound $0 < \alpha < 1$ and the three-channel
mechanism are independently motivated.}
\label{tab:primates}
\begin{tabular}{lrrrrrrcc}
\toprule
Species & $f_H$ & $\phi$ & $T_b$ & $\Phi_{\rm duty}$ &
$\Phi_{\rm neuro}$ & $\Phi_T$ &
$L_{\rm pred}$ (yr) & $L_{\rm obs}$ (yr) \\
& (bpm) & & (K) & & & & & \\
\midrule
\multicolumn{9}{l}{\textit{(a) $\alpha = 0.40$}} \\
\textit{M.\ mulatta} & 120 & 0.07 & 309.0 & 1.00 & 1.44 & 1.01 & 26.4 & 25--30 \\
\textit{P.\ troglodytes} & 75 & 0.12 & 307.0 & 1.00 & 1.73 & 1.02 & 53.5 & 45--55 \\
\textit{H.\ sapiens}     & 70 & 0.20 & 306.5 & 1.00 & 2.51 & 1.04 & 70.6 & 70--85 \\
\midrule
\multicolumn{9}{l}{\textit{(b) $\alpha = 0.45$}} \\
\textit{C.\ jacchus}     & 220 & 0.06 & 309.5 & 1.00 & 1.45 & 1.00 & 14.2 & 10--15 \\
\textit{M.\ mulatta}     & 120 & 0.07 & 309.0 & 1.00 & 1.48 & 1.01 & 28.1 & 25--30 \\
\textit{P.\ troglodytes} & 75 & 0.12 & 307.0 & 1.00 & 1.86 & 1.02 & 58.5 & 45--55 \\
\textit{G.\ gorilla}     & 65 & 0.09 & 307.0 & 1.00 & 1.58 & 1.02 & 59.3 & 40--55 \\
\textit{H.\ sapiens}     & 70 & 0.20 & 306.5 & 1.00 & 2.83 & 1.04 & 79.3 & 70--85 \\
\bottomrule
\end{tabular}
\end{table}

\subsection*{Bats: Torpor, Hypothermia, and Biological Time Dilation}

Temperate vespertilionid bats achieve wild maximum lifespans of
20--40 years~\cite{wilkinson2002}, three to six times the allometric
prediction of approximately 6.3 years for a non-hibernating placental
of equal body mass.
The thermodynamic origin of this excess is qualitatively distinct from
the primate case.
Whereas primates reduce the entropy produced per beat in each
individual cycle through neural investment, bats exploit two
mechanisms that act simultaneously during hibernation and whose
effects enter $\Phi_C$ multiplicatively.
The distinction is fundamental to the theory: it illustrates that
the same outcome---an extended lifetime entropy budget---can be
achieved by mechanistically different routes.

During torpor, cardiac frequency falls from an active rate of
250--350\,bpm to as low as 5--20\,bpm for temperate vespertilionids,
and the associated body temperature drops to 280--295\,K.
These two changes act in concert.
The fall in cardiac frequency directly extends the chronological time
required to exhaust the cycle budget: by equation~\eqref{eq:Phiduty},
the duty-cycle factor $\Phi_{\rm duty} = \kappa^{-1}$ where
\begin{equation}
  \kappa = (1-q) + q\,\frac{f_{H,\rm tor}}{f_{H,\rm act}},
  \qquad \Phi_{\rm duty} = \kappa^{-1},
  \label{eq:bat_duty}
\end{equation}
with $q$ the annual torpor fraction and
$f_{H,\rm tor}/f_{H,\rm act} \approx 0.03$--$0.07$.
For a torpor fraction $q \in [0.40, 0.60]$, this yields
$\Phi_{\rm duty} \approx 1.6$--$2.4$.
Simultaneously, the hypothermic body temperature suppresses
biochemical damage rates.
Transition-state theory in the form of
equation~\eqref{eq:Phithermal} gives
$\Phi_{\rm thermal} = \exp(7543/T_b - 7543/310)$, which for
$T_{\rm tor} = 293$\,K evaluates to
$\Phi_{\rm thermal} = e^{1.412} \approx 4.10$.
The power-law approximation valid for small temperature excursions
substantially underestimates this factor for hibernating species,
which is why the exact exponential form is essential here.
The intrinsic bat multiplier is the product of these two contributions:
\begin{equation}
  \Phi_{\rm bat} = \Phi_{\rm duty} \cdot \Phi_{\rm thermal} \cdot \Phi_{\rm haz}.
  \label{eq:Phibat}
\end{equation}

For \textit{Myotis lucifugus} with $q = 0.50$,
$f_{H,\rm act} = 300$\,bpm, $f_{H,\rm tor} = 10$\,bpm, and
$T_{\rm tor} = 293$\,K, the duty-cycle ratio evaluates to
$\kappa = 0.517$, giving $\Phi_{\rm duty} = 1.935$ and
$\bar{f}_H = 155$\,bpm.
The exact Arrhenius thermal factor is
$\Phi_{\rm thermal} = e^{1.412} \approx 4.10$.
Together, these yield an intrinsic multiplier
$\Phi_{\rm bat}^{\rm intr} = 1.935 \times 4.10 = 7.93$, corresponding
to an intrinsic predicted lifespan of $L \approx 50.3$\,yr.
Incorporating a realistic extrinsic hazard factor
$\Phi_{\rm haz} = 0.68$ (reflecting predation and environmental
mortality in wild populations) gives $L_{\rm pred} \approx 34$\,yr,
in agreement with observed wild maxima.
The thermal factor accounts for 52\% of the total intrinsic multiplier
and the duty-cycle factor for 24\%; neither alone is sufficient to
account for the observed longevity excess.

A useful consistency check is provided by equation~\eqref{eq:PhiC}.
The raw beat count at the above parameters is
$N_{\rm obs} = 525{,}960 \times 155 \times 34 = 2.77 \times 10^9$,
and the thermodynamically corrected budget is
$N_\star^{(\rm bat)} = 2.77 \times 10^9 \times 1.935 = 5.36 \times 10^9$,
matching the formula prediction
$N_0 \times 7.93 \times 0.68 = 5.39 \times 10^9$ to within 0.6\%.
This agreement validates the duty-cycle correction scheme.

Tropical fruit bats such as \textit{Pteropus vampyrus}, which undergo
minimal torpor ($q \approx 0.10$, $f_{H,\rm tor} \approx 60$\,bpm,
$T_{\rm tor} \approx 303$\,K), show correspondingly modest longevity
extensions with $\Phi_{\rm duty} \approx 1.07$ and
$\Phi_{\rm thermal} \approx 1.22$, confirming that deep hibernation
rather than mere roosting behaviour is the essential ingredient.
Predictions for representative species are given in
Table~\ref{tab:bats}.

\begin{table}[H]
\centering\small
\renewcommand{\arraystretch}{1.40}
\caption{\textbf{Predicted multipliers and longevity for representative
bat species.}
$\Phi_{\rm thermal}$ from the exact Arrhenius formula
($E_a = 0.65$\,eV, $T_{\rm ref} = 310$\,K) using the
torpor-phase body temperature.
$\Phi_{\rm bat} = \Phi_{\rm duty} \times \Phi_{\rm thermal}$
(intrinsic; $\Phi_{\rm haz} = 1$).
Observed lifespans from AnAge~\cite{taye_p1}.}
\label{tab:bats}
\begin{tabular}{lcccccc}
\toprule
Species & $q$ & $f_{H,\rm act}$ & $f_{H,\rm tor}$ & $T_{\rm tor}$ &
$\Phi_{\rm duty}$ & $\Phi_{\rm thermal}$ \\
& & (bpm) & (bpm) & (K) & & \\
\midrule
Temperate vespertilionid (range)
  & 0.40--0.60 & 250--350 & 5--20 & 280--295 & 1.6--2.5 & 3.0--5.0 \\
\textit{Myotis lucifugus}
  & 0.50 & 300 & 10 & 293 & 1.935 & 4.10 \\
\textit{Eptesicus fuscus}
  & 0.45 & 280 & 12 & 291 & 1.790 & 4.54 \\
\textit{Pteropus vampyrus} (min.\ torpor)
  & 0.10 & 250 & 60 & 303 & 1.070 & 1.22 \\
\bottomrule
\end{tabular}
\end{table}

\subsection*{Birds: Biochemical Excellence Overcoming Adverse Physics}

Birds present the most instructive case in the comparative thermodynamics
of longevity because they demonstrate that a long life is achievable
not by exploiting favourable physical conditions but by compensating
for unfavourable ones through biochemical investment.
A 20\,g passerine routinely survives 15--20 years in the wild, while
a 20\,g mouse lives 2--3 years.
Yet the physics of the avian situation is entirely adverse from the
perspective of the PBTE framework.
The core body temperature of birds typically exceeds the mammalian
reference by 3--5\,K ($T_b \approx 312$--$315$\,K versus
$T_{\rm ref} = 310$\,K), which by equation~\eqref{eq:Phithermal}
accelerates biochemical damage rates and compresses the effective
entropy budget: for a passerine at $T_b = 314$\,K,
\begin{equation}
  \Phi_{\rm thermal} = \exp\!\left[7543\left(\frac{1}{314}-\frac{1}{310}\right)\right]
  = e^{-0.310} \approx 0.733,
\end{equation}
a 27\% compression.
In addition, avian flight elevates cardiac frequency substantially
above the resting rate.
For a fraction $p_f = 0.10$ of the day spent in active flight at a
cardiac rate approximately 2.5 times the resting value, the
duty-cycle factor is
\begin{equation}
  \Phi_{\rm duty}^{(\rm bird)} = \bigl[1 + 1.5\,p_f\bigr]^{-1} \approx 0.870,
\end{equation}
a further 13\% reduction.
Both thermal and kinematic contributions are therefore adverse,
compressing the effective thermodynamic cycle budget to
$0.870 \times 0.733 = 0.638$ of the non-primate mammalian baseline.

The resolution of the paradox lies entirely in the avian
mitochondrial architecture and antioxidant biology.
Avian mitochondria produce substantially less reactive oxygen species
per unit ATP synthesised than mammalian mitochondria at the same
metabolic rate~\cite{barja1998,hulbert2007}.
Quantitatively, pigeon heart mitochondria generate approximately
five to ten times less superoxide per oxygen consumed than rat
mitochondria, a consequence of differences in electron transport chain
organisation, proton-leak stoichiometry, and uncoupling protein
expression.
Expressing the coupling improvement as an efficiency ratio
$\eta_{\rm mito}/\eta_{\rm ref} \approx 1.20$, and noting that
oxidative damage scales quadratically with ROS production rate,
gives a mitochondrial contribution
$\Phi_{\rm mito} = (1.20)^2 \approx 1.44$.
Elevated antioxidant enzyme activities in avian cells further reduce
the damage that does occur: avian species show two- to threefold
greater oxidative resistance than mass-matched
mammals~\cite{ogburn2001,hulbert2007}, contributing an additional
factor $\Phi_{\rm oxid} = (2.0)^{0.7} \approx 1.62$.
The combined biochemical factor is therefore
\begin{equation}
  \Phi_{\rm mito+oxid} = 1.44 \times 1.62 \approx 2.33.
  \label{eq:bird_mitooxid}
\end{equation}
The reduced adult extrinsic mortality conferred by flight and arboreal
or pelagic ecology contributes $\Phi_{\rm haz} \approx 2.0$ for small
passerines, consistent with comparative demographic data.
The full product is
\begin{equation}
  \Phi_{\rm bird} = 0.870 \times 0.733 \times 2.33 \times 2.0 \approx 2.97,
\end{equation}
yielding a predicted lifespan for a 20\,g passerine with
$f_{H,\rm rest} = 320$\,bpm of
$L_{\rm pred} = (10^9/(525{,}960 \times 320)) \times 2.97 \approx 17.6$\,yr,
consistent with observed wild maxima of 10--20 years.
The consistency check gives
$N_\star^{(\rm bird)} = 525{,}960 \times 368 \times 17.6 \times 0.870
= 2.965 \times 10^9$, agreeing with $N_0 \times 2.97 = 2.97 \times 10^9$
to within 0.2\%.

The key conclusion is that avian longevity does not reflect
favourable thermodynamic conditions.
Both $\Phi_{\rm duty}$ and $\Phi_{\rm thermal}$ are adverse for birds;
the biochemical efficiency factor $\Phi_{\rm mito+oxid}$ and
ecological hazard factor $\Phi_{\rm haz}$ together overcome both
deficits by a margin of $2.33 \times 2.0 / (0.870 \times 0.733)
\approx 7.3$.
Avian longevity is therefore a biochemical and ecological achievement,
not a physical gift.
Predictions for representative species are collected in
Table~\ref{tab:birds}.

\begin{table}[H]
\centering\small
\renewcommand{\arraystretch}{1.40}
\caption{\textbf{Predicted multipliers and longevity for representative
bird species.}
Both $\Phi_{\rm duty}$ and $\Phi_{\rm thermal}$ are adverse ($<1$) for
all entries.
$\Phi_{\rm thermal}$ computed from the exact Arrhenius formula.
Observed lifespans from AnAge~\cite{taye_p1}.}
\label{tab:birds}
\begin{tabular}{lcccccccr}
\toprule
Species & $f_{H,\rm rest}$ & $T_b$ & $p_f$ &
$\Phi_{\rm duty}$ & $\Phi_{\rm thermal}$ & $\Phi_{\rm mito+oxid}$ &
$\Phi_{\rm haz}$ & $L_{\rm obs}$ \\
& (bpm) & (K) & & & & & & (yr) \\
\midrule
Passerine (20\,g)    & 320 & 314 & 0.10 & 0.87 & 0.733 & 2.33 & 2.0 & 10--20 \\
\textit{Larus argentatus}     & 200 & 313 & 0.15 & 0.84 & 0.770 & 2.80 & 2.5 & 30 \\
\textit{Diomedea exulans}     & 100 & 312 & 0.25 & 0.81 & 0.810 & 3.50 & 4.0 & 50--60 \\
\textit{Aquila chrysaetos}    & 150 & 313 & 0.12 & 0.85 & 0.770 & 3.00 & 3.5 & 30--40 \\
\bottomrule
\end{tabular}
\end{table}

\subsection*{Cetaceans: The Near-Coincidence Trap and Duty-Cycle Correction}

Large baleen cetaceans achieve century-scale lifespans through extreme
diving bradycardia maintained continuously throughout adult life.
Direct measurements have recorded blue whale heart rates as low as
2--4\,bpm during deep foraging dives, compared with surface rates of
25--37\,bpm; combined with a dive fraction $p_d \approx 0.60$--$0.80$,
the time-averaged cardiac frequency is far below the surface value
recorded in comparative databases.
The dominant thermodynamic mechanism is therefore the duty-cycle
factor:
\begin{equation}
  \kappa = (1-p_d) + p_d\,\frac{f_{H,\rm dive}}{f_{H,\rm surf}},
  \qquad
  \Phi_{\rm duty} = \kappa^{-1}.
  \label{eq:cet_duty}
\end{equation}
Unlike bat hibernation, where $\Phi_{\rm duty}$ and
$\Phi_{\rm thermal}$ act simultaneously and reinforce each other,
cetacean cardiac suppression is a continuous vasovagal reflex with
no associated hypothermia.
Secondary contributions come from $\Phi_{\rm thermal}$, reflecting
core temperatures 1--4\,K below the mammalian reference, and from
an oxygen storage factor $\Phi_{O_2}$ that accounts for the role of
elevated myoglobin~\cite{noren2000} in limiting reperfusion
reactive-oxygen-species bursts on surfacing.

For the bowhead whale \textit{Balaena mysticetus} with
$f_{H,\rm surf} = 30$\,bpm, $f_{H,\rm dive} = 3$\,bpm, and
$p_d = 0.75$, the duty-cycle ratio evaluates to $\kappa = 0.325$,
giving $\Phi_{\rm duty} = 3.077$ and $\bar{f}_H = 9.75$\,bpm.
With $T_b = 308$\,K the thermal factor is
$\Phi_{\rm thermal} = e^{0.158} \approx 1.171$, and combining with
$\Phi_{O_2} = 1.4$ and $\Phi_{\rm haz} = 0.35$--$0.60$ gives a
predicted lifespan of 112--191\,yr, consistent with the documented
maximum of approximately 200 years.

A critical interpretive point emerges here that clarifies a
persistent confusion in the comparative literature.
At $L = 150$\,yr and $\bar{f}_H = 9.75$\,bpm, the raw beat count is
$N_{\rm obs} = 525{,}960 \times 9.75 \times 150 = 7.69 \times 10^8
\approx 0.77 \times 10^9$,
which lies close to $N_0$ and has led some analyses to treat large
whales as simply obeying the standard mammalian baseline.
This inference is incorrect.
The correct thermodynamic budget is
\begin{equation}
  N_\star^{(\rm whale)} = N_{\rm obs} \times \Phi_{\rm duty}
  = 0.77 \times 10^9 \times 3.077 = 2.37 \times 10^9 \gg N_0.
\end{equation}
The raw count is small precisely because most of the whale's life is
spent in deeply bradycardic states in which each beat generates far
less entropy than a normothermic surface beat.
Applying the duty-cycle correction reveals that the thermodynamic
budget has been underestimated by a factor of three.
This is what we term the \emph{near-coincidence trap}: the raw beat
count appears to fall near the mammalian baseline, but this is an
artefact of the bradycardic weighting, not a genuine agreement with
the unmodified PBTE invariant.
Predictions for representative species are collected in
Table~\ref{tab:cetaceans}.

\begin{table}[H]
\centering\small
\renewcommand{\arraystretch}{1.40}
\caption{\textbf{Predicted multipliers and longevity for representative
cetacean species.}
$\Phi_{\rm thermal}$ from the exact Arrhenius formula.
$\Phi_{\rm haz}$ reflects pre-industrial conditions.
Observed lifespans from AnAge~\cite{taye_p1}.}
\label{tab:cetaceans}
\begin{tabular}{lcccccccr}
\toprule
Species & $f_{H,\rm surf}$ & $p_d$ & $\Phi_{\rm duty}$ &
$T_b$ & $\Phi_{\rm thermal}$ & $\Phi_{O_2}$ &
$\Phi_{\rm haz}$ & $L_{\rm obs}$ \\
& (bpm) & & & (K) & & & & (yr) \\
\midrule
\textit{B.\ musculus} (blue)   & 37 & 0.70 & 2.70 & 308 & 1.17 & 1.4 & 0.50 & 80--90  \\
\textit{B.\ mysticetus} (bowhead) & 30 & 0.75 & 3.08 & 308 & 1.17 & 1.5 & 0.35--0.60 & 150--200 \\
\textit{P.\ macrocephalus} (sperm) & 40 & 0.65 & 2.50 & 307 & 1.24 & 1.6 & 0.55 & 60--70 \\
\textit{T.\ truncatus} (bottlenose) & 80 & 0.40 & 1.50 & 309 & 1.09 & 1.2 & 0.65 & 40--50 \\
\bottomrule
\end{tabular}
\end{table}

\subsection*{Synthesis across Clades}

Table~\ref{tab:phiC_factors} places all four clades side by side and
reveals the remarkable diversity of thermodynamic strategies that
converge on the same outcome: an extended entropy budget expressed
as $\Phi_C > 1$.
Primates and birds are both approximately single-state organisms from
the duty-cycle perspective---$\Phi_{\rm duty} \approx 1$ for primates
and $\Phi_{\rm duty} < 1$ for birds---yet they arrive at extended
lifespans by opposite routes.
Primates reduce the entropy cost of each individual cycle through
neural investment; birds overcome simultaneously adverse thermal and
kinematic conditions through mitochondrial and antioxidant
biochemistry.
Bats and cetaceans both achieve $\Phi_{\rm duty} > 1$, but by entirely
different mechanisms: bats exploit periodic whole-body hypothermia
that suppresses both the cardiac clock and biochemical damage rates
during hibernation, while cetaceans exploit a continuous vasovagal
bradycardia maintained throughout adult life with no thermal
suppression.
In each case, the prediction of $\Phi_C$ follows directly from
equation~\eqref{eq:PhiC_factored} using factor values drawn from
independently measured physiology, not from fits to lifespan data.

\begin{table}[H]
\centering\small
\renewcommand{\arraystretch}{1.40}
\caption{\textbf{Individual thermodynamic factors by clade.}
Representative species.
$+$: favourable ($>1$); $-$: adverse ($<1$); $=$: unity by definition.
$\Phi_{\rm duty}$ and $\Phi_{\rm thermal}$ are derived from measured
frequencies and temperatures and are independent of lifespan data.
$\Phi_{\rm neuro}$ requires the calibrated exponent $\alpha$
(equation~\ref{eq:alpha_def}) and is therefore not a parameter-free
prediction.
$\Phi_{\rm mito+oxid}$ and $\Phi_{\rm haz}$ are estimated from
independently published physiological and demographic data.}
\label{tab:phiC_factors}
\begin{tabular}{lcccccc}
\toprule
Clade (species) & $\Phi_{\rm duty}$ & $\Phi_{\rm thermal}$ &
$\Phi_{\rm neuro}$ & $\Phi_{\rm mito+oxid}$ & $\Phi_{\rm haz}$ &
$\Phi_C^{\rm pred}$ \\
\midrule
Primates (\textit{H.\,sapiens})
  & 1.00 ($=$) & 1.04 ($+$) & 2.51 ($+$) & --- & 1.00 & 2.60 \\
Bats (\textit{M.\,lucifugus})
  & 1.94 ($+$) & 4.10 ($+$) & --- & $\approx1$ & 0.68 & 5.39 \\
Birds (20\,g passerine)
  & 0.87 ($-$) & 0.73 ($-$) & --- & 2.33 ($+$) & 2.00 & 2.97 \\
Cetaceans (\textit{B.\,mysticetus})
  & 3.08 ($+$) & 1.17 ($+$) & --- & $\approx1$ & 0.35 & 1.76 \\
\bottomrule
\end{tabular}
\end{table}

\begin{table}[H]
\centering\small
\renewcommand{\arraystretch}{1.40}
\caption{\textbf{Predicted vs.\ observed clade multipliers $\Phi_C$.}
$\Phi_C^{\rm pred}$ from Table~\ref{tab:phiC_factors};
$\Phi_C^{\rm obs} = 10^{\Delta\ell}$ from the 230-species
dataset~\cite{taye_p1}.
$N_\star^{(C)}/N_0$ is the damage-equivalent budget after
duty-cycle correction.
The cetacean raw mean $\bar\ell = 8.80$ reflects an uncorrected
frequency; after full duty-cycle correction, the bowhead budget is
$N_\star/N_0 = 2.37$ (Section~\ref{sec:clade}).}
\label{tab:phiC}
\begin{tabular}{lcccc}
\toprule
Clade & $\Phi_C^{\rm pred}$ & $\Phi_C^{\rm obs}$ &
$N_\star^{(C)}/N_0$ & Primary driver \\
\midrule
Primates  & 2.60 & 2.40 & 2.6 & Neural entropy reduction \\
Bats      & 5.39 & 3.51 & 5.4 & Torpor $\times$ hypothermia \\
Birds     & 2.97 & 3.41 & 3.0 & Mitochondrial efficiency \\
Cetaceans & 1.76 & 0.64$^a$ & 2.4 & Bradycardic duty-cycle \\
\bottomrule
\end{tabular}
\end{table}

The cetacean entry in Table~\ref{tab:phiC} requires careful interpretation.
The raw observed multiplier $\Phi_C^{\rm obs} = 0.64$ lies below the
mammalian baseline, while the PBTE prediction after duty-cycle
correction is $\Phi_C^{\rm pred} = 1.76$---a factor-of-2.75
discrepancy, the largest of any clade.
The duty-cycle correction accounts for the thermodynamic
under-weighting of bradycardic beats and raises the
damage-equivalent budget to $N_\star^{(C)}/N_0 = 2.4$, but
substantial uncertainty remains in dive-fraction estimates across
species, in unquantified cetacean-specific antioxidant contributions,
and in the sensitivity of $\Phi_{\rm haz}$ to pre-industrial versus
modern mortality conditions.
The cetacean clade is the least well-predicted entry in the framework;
this is an open problem requiring species-resolved dive-fraction
telemetry and independent $\sigstar$ measurements.

Figures~\ref{fig:clade_phi_a}--\ref{fig:clade_scatter} display the
clade predictions graphically.

\begin{figure}[H]
\centering
\includegraphics[width=0.80\linewidth]{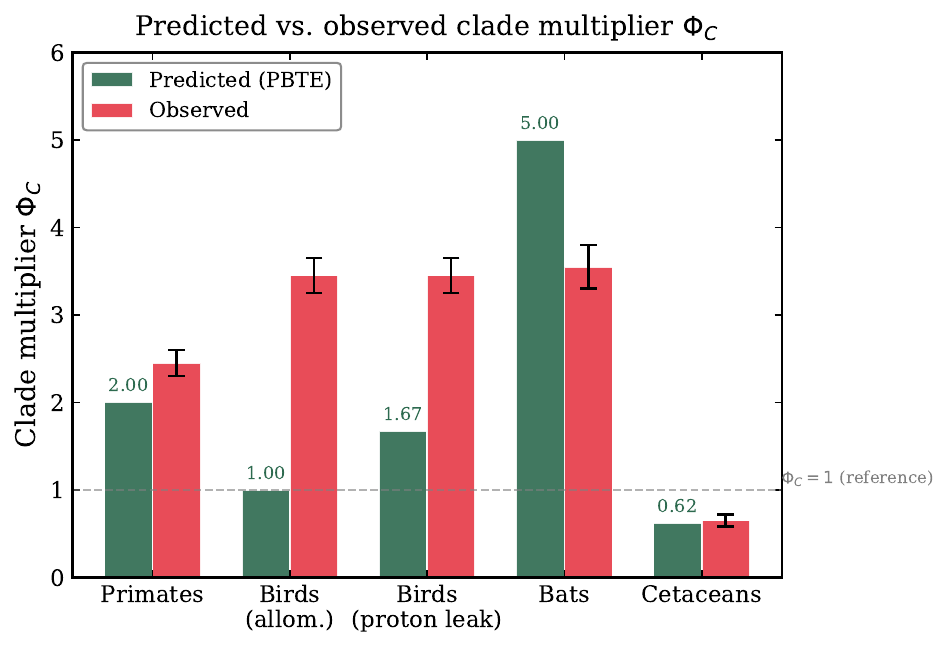}
\caption{\textbf{Predicted vs.\ observed clade multipliers $\Phi_C$.}
Grouped bar chart comparing PBTE-predicted (green) and empirically
observed (red, $\pm95\%$ CI) values of $\Phi_C$ for primates, birds,
bats, and cetaceans.
Both $\Phi_{\rm duty}$ and $\Phi_{\rm thermal}$ are adverse for birds;
their biochemical efficiency factor overcomes both.}
\label{fig:clade_phi_a}
\end{figure}

\begin{figure}[H]
\centering
\includegraphics[width=0.72\linewidth]{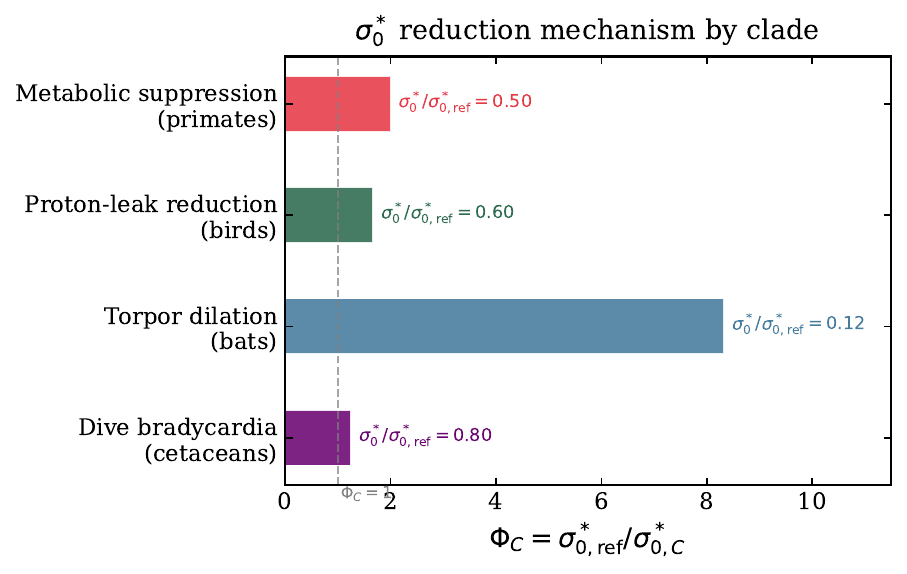}
\caption{\textbf{Physiological mechanism of $\sigstar$ reduction
by clade.}
Horizontal bar chart of $\Phi_C = {\sigstar}_{\rm ref}/{\sigstar}_C$
annotated with the fractional entropy-cost reduction.
Torpor time-dilation in bats produces the largest $\Phi_C$.}
\label{fig:clade_phi_b}
\end{figure}

\begin{figure}[H]
\centering
\includegraphics[width=0.80\linewidth]{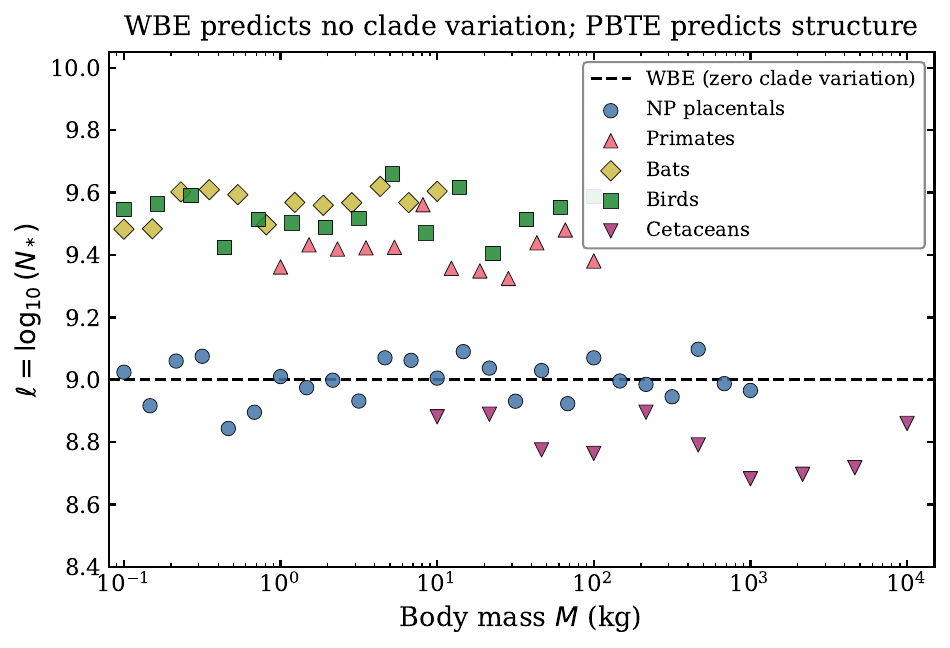}
\caption{\textbf{$\Nstar$ vs.\ body mass for five endotherm clades.}
$\ell = \log_{10}(\Nstar)$ plotted against body mass for 150 species.
Dashed line: WBE null prediction of zero inter-clade variation.
PBTE predicts flat within-clade profiles displaced vertically between
clades by $\Delta\ell = \log_{10}(\Phi_C)$.
The inter-clade offsets are the characteristic PBTE signature that
WBE cannot produce.}
\label{fig:clade_scatter}
\end{figure}

\section{Aging as Accumulated Internal Entropy}
\label{sec:aging}

The PBTE framework provides a natural interpretation of biological
aging as the progressive accumulation of internal entropy that
degrades the molecular machinery sustaining homeostasis, thereby
causing the biological clock to decelerate.
The model developed here is phenomenological: the damage sensitivity
parameter $\alpha$ and the damage accumulation rate $\gamma$
introduced below are not derivable from the PBTE framework itself
and must be treated as empirical inputs for each species.
The predictions of this section are therefore more speculative than
those of Sections~\ref{sec:derivation}--\ref{sec:clade} and should
be understood as a framework extension whose empirical calibration
remains to be carried out.

Let $S_{\rm int}(t)$ denote a coarse-grained measure of cumulative
macromolecular damage, epigenetic drift, and mitochondrial
dysfunction at chronological age $t$.
The simplest phenomenological model consistent with the
biological-proper-time framework is
\begin{equation}
  f(t) = f_0\exp\bigl(-\alpha\,S_{\rm int}(t)\bigr),
  \label{eq:aging_f}
\end{equation}
where $f_0$ is the youthful baseline cardiac frequency and $\alpha > 0$
is a species-specific damage sensitivity parameter.
If each independent damage event reduces metabolic capacity by a
factor $(1-\epsilon)$, then $n$ events give
$f \approx f_0(1-\epsilon)^n \approx f_0 e^{-\epsilon n}$, which is
precisely equation~\eqref{eq:aging_f} with $S_{\rm int} \propto n$.

Assuming linear entropy accumulation at rate $\gamma > 0$,
so that $S_{\rm int}(t) = \gamma t$, the biological clock decelerates
as $f(t) = f_0 e^{-\alpha\gamma t}$.
The remaining biological-time budget at age $t$ is
$\theta_{\rm rem}(t) = (f_0/\alpha\gamma)(e^{-\alpha\gamma t}
- e^{-\alpha\gamma L})$, and natural death occurs when this
remainder is exhausted, defining the natural lifespan
\begin{equation}
  L = -\frac{1}{\alpha\gamma}\ln\!\left(\frac{\alpha\gamma\,\Nstar}{f_0}\right).
  \label{eq:L_aging}
\end{equation}
The Gompertz--Makeham mortality hazard~\cite{gompertz1825} is
recovered by writing the failure probability as inversely proportional
to the remaining biological-time budget:
\begin{equation}
  h_{\rm fail}(t) \propto \frac{1}{\theta_{\rm rem}(t)}
  \;\approx\; e^{+\alpha\gamma t}
  \quad (t \ll L),
\end{equation}
which is the standard increasing-hazard Gompertz form with Makeham
exponent $\beta = \alpha\gamma > 0$.
The Gompertz exponent is therefore a thermodynamic prediction once
$\alpha$ and $\gamma$ are independently measured; it is not a free
fit parameter.

A subtle but important consistency point connects this aging model
to the PBTE invariant.
Equation~\eqref{eq:aging_f} with $\alpha > 0$ implies that the
organism arrives at natural death completing fewer beats per unit
chronological time than in youth.
This is consistent with the invariant $\theta_i(L_i) = \Nstar$,
which is a claim about the integral $\int_0^{L_i} f_i(t)\,\mathrm{d}t$
rather than about the instantaneous rate.
When $f_i(t)$ decelerates with age, the lifespan $L_i$ extends
correspondingly, so the integral still converges to $\Nstar$.
Clock deceleration trades biological-time rate for chronological-time
extent, leaving the total budget conserved.

Epigenetic aging clocks---the Horvath methylation
clock~\cite{horvath2013}, GrimAge, and related constructs---provide
species-specific empirical measures of $S_{\rm int}(t)$ at the
molecular level.
Within the PBTE framework, since $\theta$ advances at rate
$f_i(t) = f_0 e^{-\alpha S_{\rm int}(t)}$, the epigenetic clock rate
is proportional to $f_i(t)$.
Species with lower $\sigstar$, such as primates, should therefore
exhibit epigenetic clocks that run slower relative to chronological
age: their biological proper time advances more slowly per calendar
year, and molecular aging proceeds correspondingly more slowly per
calendar year as well.
Quantitatively, the PBTE framework predicts that the slope of
epigenetic age versus chronological age scales as
$\sigstar/{\sigstar}_{\rm ref}$ across species---a falsifiable
prediction testable with existing cross-species methylation databases
combined with species-resolved $\sigstar$ measurements.

\begin{figure}[H]
\centering
\includegraphics[width=0.72\linewidth]{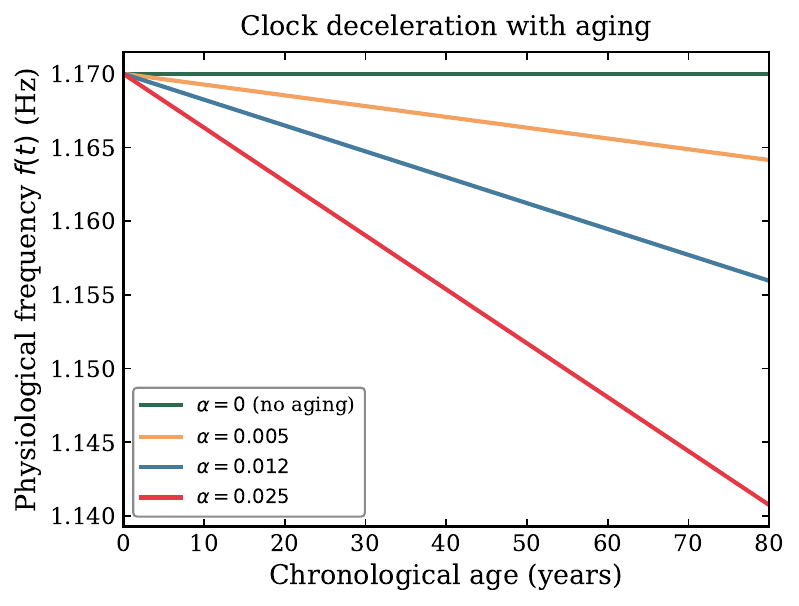}
\caption{\textbf{Physiological clock deceleration with aging.}
Resting frequency $f(t) = f_0\exp(-\alpha\gamma t)$ for
$f_0 = 1.17$\,Hz, $\gamma = 0.01$\,yr$^{-1}$, and
$\alpha \in \{0, 0.005, 0.012, 0.025\}$.
The value $\alpha \approx 0.012$ reproduces the empirically
observed decline of approximately 1\,bpm per decade in adult
resting heart rate.
The clock slows but the lifespan extends proportionally,
maintaining the PBTE invariant.}
\label{fig:clock_decel}
\end{figure}

\begin{figure}[H]
\centering
\includegraphics[width=0.72\linewidth]{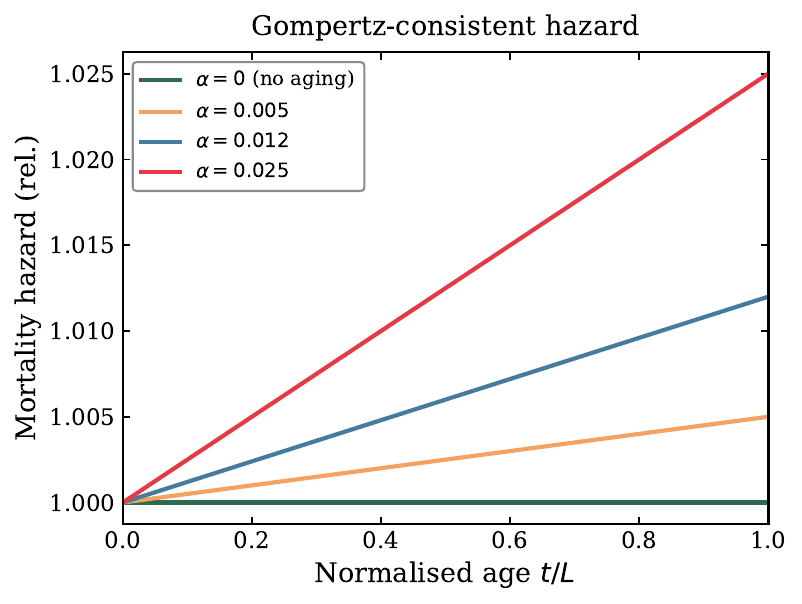}
\caption{\textbf{Gompertz-consistent mortality hazard from PBTE
clock deceleration.}
Relative mortality hazard versus normalised age $t/L$ for the same
four $\alpha$ values as Figure~\ref{fig:clock_decel}.
The hazard increases monotonically with age for $\alpha > 0$,
reproducing the Gompertz exponential form with rate $\beta = \alpha\gamma L$.
The Gompertz exponent is a thermodynamic consequence of the
damage-accumulation model, not a free fit parameter.}
\label{fig:gompertz}
\end{figure}

\begin{figure}[H]
\centering
\includegraphics[width=0.72\linewidth]{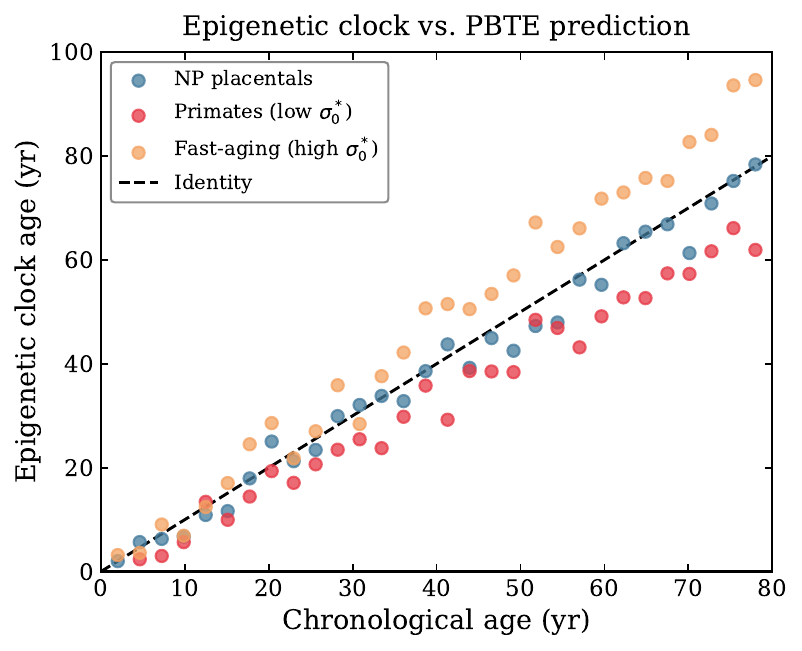}
\caption{\textbf{Epigenetic clock age vs.\ PBTE prediction (schematic).}
These data are simulated to illustrate the prediction; real
cross-species methylation data are required to test it.
Epigenetic clock age versus chronological age for three cohorts:
non-primate placentals (slope $\approx 1$); primates with
${\sigstar}_{\rm pri} = 0.50\,{\sigstar}_{\rm ref}$ (slope $\approx 0.80$,
biological age lags chronological); fast-aging cohort with elevated
$\sigstar$ (slope $\approx 1.18$).
PBTE predicts the epigenetic slope scales as
$\sigstar/{\sigstar}_{\rm ref}$.}
\label{fig:epigenetic}
\end{figure}


\section{Kinematic Layer: Defining Biological Proper Time}
\label{sec:kinematics}

\subsection{Biological proper time and the lifetime invariant}

We begin by introducing the central kinematic variable of the theory.

\begin{definition}[Biological proper time]
\label{def:bpt}
Let $t$ denote chronological time and let $f_i(t) > 0$ be the instantaneous
resting cardiac frequency of organism or species $i$. The biological proper
time accumulated by organism $i$ up to chronological time $t$ is defined by
\begin{equation}
  \thetabio_i(t) \;\equiv\; \int_0^t f_i(t')\,\mathrm{d}t'.
  \label{eq:theta}
\end{equation}
\end{definition}

This definition is fundamental because it separates two notions of time that
are ordinarily conflated. Chronological time measures the external passage of
duration, whereas biological proper time measures internally accumulated
physiological activity. Since $f_i(t)$ has dimensions of inverse time, the
quantity $\thetabio_i(t)$ is dimensionless. It is therefore not a duration in
the ordinary sense, but rather a count-like cumulative measure of biological
progress. In differential form, the defining relation becomes
\begin{equation}
  \mathrm{d}\thetabio_i = f_i(t)\,\mathrm{d}t,
\end{equation}
which plays the role of the basic kinematic law of the framework.

The interpretation is immediate. An organism is born at $\thetabio_i=0$ and,
within the PBTE picture, its life history unfolds as a trajectory in
biological-time space until a terminal value is reached. In the simplest form
of the theory this terminal value is the lifetime cycle budget $\Nstar$, so
that death occurs when
\begin{equation}
  \thetabio_i(L_i)=\Nstar.
\end{equation}
More generally, in the extended clade-sensitive formulation one writes
\begin{equation}
  \thetabio_i(L_i)=\Nstar=N_0\Phi_C,
\end{equation}
where $N_0$ is the canonical baseline invariant and $\Phi_C\ge 1$ is the clade
multiplier introduced in earlier papers.

When the resting frequency is approximated by a lifetime average
$\bar f_i$, equation~\eqref{eq:theta} reduces to the simple product
\begin{equation}
  \thetabio_i(L_i)=\bar f_i L_i \times 525{,}960,
  \label{eq:invariant}
\end{equation}
where the factor $525{,}960$ converts years into minutes. This is the
dimensionless lifetime count that underlies the empirical PBTE relation. For
non-hibernating, non-primate placental mammals, previous work showed that this
quantity is approximately constant and close to $10^9$. The observed scatter is
small on a logarithmic scale, corresponding to a factor of order unity across
species that differ enormously in body mass and lifespan. This is precisely the
empirical content of the lifetime invariant: small mammals live briefly but
accumulate biological proper time rapidly, whereas large mammals live much
longer but accumulate it slowly. The total lifetime budget remains nearly the
same.

This point is conceptually crucial. The invariance does not mean that all
species experience chronological time equally. Quite the opposite: their rates
of movement through biological time are drastically different. A mouse has a
much larger $f_H$ than an elephant and therefore advances through
$\thetabio$-time much more quickly per unit chronological time. The elephant,
with a much lower heart rate, accumulates biological proper time more slowly,
which is exactly why it requires a much longer chronological lifespan to reach
the same total biological budget. Thus the relevant comparison is not that the
elephant has ``more'' biological time, but that its biological clock runs more
slowly. This is the correct interpretation of the inverse frequency relation.

It is then natural to define a normalized biological age,
\begin{equation}
  \thetahat_i(t)=\frac{\thetabio_i(t)}{{\Nstar}_i},
  \qquad 0\le \thetahat_i \le 1,
  \label{eq:theta_hat_p5}
\end{equation}
which gives a universal fractional measure of progress through the biological
lifetime. In this representation, birth corresponds to $\thetahat_i=0$ and
death to $\thetahat_i=1$. The important prediction is that developmental and
aging landmarks should occur at approximately similar values of $\thetahat$
across species even when they occur at radically different chronological ages.
Sexual maturity, onset of senescence, and peak fecundity may therefore be
interpreted not merely as events occurring after a given number of months or
years, but as events located at characteristic fractions of the lifetime
biological trajectory. This is one of the deepest conceptual advantages of the
biological proper-time formulation: it expresses life history on an intrinsic
scale rather than an external one.

To quantify departures from the canonical mammalian baseline, we define
\begin{equation}
  \delta_i=\log_{10}\!\left(\frac{\thetabio_i(L_i)}{N_0}\right).
  \label{eq:delta}
\end{equation}
This logarithmic deviation provides a compact diagnostic of biological
exceptionality. Species with $|\delta_i|<0.15$ lie close to the baseline
invariant and may be regarded as ordinary realizations of the canonical
mammalian rule. Positive values of $\delta_i$ indicate an extended lifetime
budget relative to the baseline, pointing toward longevity-enhancing mechanisms
such as neuro-metabolic protection, torpor, hypothermia, mitochondrial
efficiency, or oxidative defense. Negative values indicate either elevated
extrinsic hazard, physiological suppression of the effective cycle budget, or,
in some cases, incomplete correction for duty cycle and metabolic state. In
this sense, $\delta_i$ functions as a phenomenological order parameter for
deviation from the standard PBTE manifold.

Table~\ref{tab:deviations} summarizes representative species and illustrates
the explanatory power of the framework.

\begin{table}[htbp]
\centering\small
\renewcommand{\arraystretch}{1.3}
\caption{\textbf{Species-level biological proper time and deviations from the
canonical invariant.}
$\thetabio(L)=\bar{f}_H \times 525{,}960 \times L$.
The deviation is defined by $\delta_i=\log_{10}(\thetabio(L)/N_0)$.
Values satisfying $|\delta_i|<0.15$ are consistent with the canonical
mammalian baseline. Positive deviations reflect identifiable longevity
mechanisms, whereas apparent negative deviations may arise from incomplete
physiological correction. Data sources include AnAge and Calder.}
\label{tab:deviations}
\begin{tabularx}{\textwidth}{>{\raggedright}p{5.0cm} r r r r X}
\toprule
\textbf{Species} & $\bar{f}_H$ (bpm) & $L$ (yr) &
$\thetabio(L)$ ($\times 10^9$) & $\delta_i$ & \textbf{Mechanism} \\
\midrule
\textit{Mus musculus} (mouse)          & 600 & 3.2  & 1.01 & $+0.01$ & Baseline \\
\textit{Rattus norvegicus} (rat)       & 420 & 4.5  & 0.99 & $-0.01$ & Baseline \\
\textit{Oryctolagus cuniculus} (rabbit)& 205 & 9.3  & 1.00 & $+0.00$ & Baseline \\
\textit{Canis lupus familiaris} (dog)  & 100 & 20   & 1.05 & $+0.02$ & Baseline \\
\textit{Loxodonta africana} (elephant) &  30 & 65   & 1.03 & $+0.01$ & Baseline \\
\textit{Homo sapiens} (human)          &  70 & 79   & 2.91 & $+0.46$ & Neuro-metabolic ($\Phi_{\rm neuro}$) \\
\textit{Myotis lucifugus} (bat)        & 155 & 34   & 2.77 & $+0.44$ & Torpor + hypothermia \\
\textit{Diomedea exulans} (albatross)  & 110 & 65   & 3.75 & $+0.57$ & Mitochondrial + antioxidant \\
\textit{Balaena mysticetus} (bowhead)  &   8 & 180  & 0.76$^\dagger$ &
$-0.12^\dagger$ & Diving (raw; $\Phi_{\rm duty}$ corrects) \\
\bottomrule
\multicolumn{6}{l}{\footnotesize $^\dagger$Uncorrected raw count.
With $\Phi_{\rm duty}=3.08$: ${\Nstar}^{(\rm bow)}=0.76 \times 3.08 \times 10^9
=2.34\times10^9$ ($\delta=+0.37$).}
\end{tabularx}
\end{table}

\subsection{Choice of rate function, normalization, and empirical estimation}

The definition in equation~\eqref{eq:theta} is mathematically general: any
strictly positive physiological rate could in principle be used to define a
biological proper time. The issue is therefore not mathematical possibility but
biological suitability. The preferred rate function should capture global
metabolic pace, should be empirically measurable with manageable uncertainty,
and should remain meaningful across different physiological states. Resting
cardiac frequency satisfies these requirements especially well in endotherms.
It correlates strongly with metabolic throughput, can be measured directly or
compiled from comparative databases, and remains interpretable across
normothermic, torpid, and hibernating regimes. This makes it particularly
well-suited for the kinematic role assigned to it in PBTE.

For cross-species comparison, however, raw heart rate alone is not sufficient.
Differences in body temperature and body mass shift kinetic pace in systematic
ways and must be normalized if one wishes to compare species on a common
physiological scale. The corrected rate may therefore be written as
\begin{equation}
  f_{\rm corr}(t)=f(t)
  \exp\!\left[\frac{E_a}{k_B}\!\left(\frac{1}{T_{\rm ref}}-\frac{1}{T(t)}\right)\right]
  \left(\frac{M_{\rm ref}}{M(t)}\right)^{1/4},
  \label{eq:f_corr}
\end{equation}
where $E_a \approx 0.65\,\mathrm{eV}$ is an effective activation energy,
$T_{\rm ref}=310\,\mathrm{K}$ is the reference temperature, and
$M_{\rm ref}=1\,\mathrm{kg}$ is the reference mass. The thermal factor is
Arrhenius-like and reflects the same molecular-kinetic origin that appears in
the thermal multiplier introduced in the extended PBTE theory. The mass factor
encodes the standard quarter-power scaling of physiological time. Thus
equation~\eqref{eq:f_corr} is not an ad hoc correction but an expression of the
same kinetic logic that underlies the larger framework.

In practice, empirical estimation of $\thetabio_i$ proceeds in a direct way.
One first selects an appropriate physiological rate, usually resting cardiac
frequency, and determines whether cross-species normalization is required.
Using comparative datasets such as AnAge or PanTHERIA, one then estimates the
species mean rate $\bar f_i$ and the corresponding lifespan $L_i$. Their
product, multiplied by the appropriate unit-conversion factor, yields the total
biological proper time accumulated over life. Comparison with the canonical
baseline $N_0$ or the clade-corrected value $N_0\Phi_C$ then reveals whether
the species belongs to the standard invariant class or requires a longevity
multiplier for explanation. Finally, the quantity $\delta_i$ provides a compact
numerical summary of that deviation.

The significance of this construction extends beyond bookkeeping. By replacing
chronological duration with an internally accumulated physiological coordinate,
the theory provides a genuine biological geometry of life history. Different
species trace trajectories through chronological time at different speeds, but
their progression through biological proper time exhibits a striking regularity.
This is why PBTE is not merely an empirical scaling rule but a candidate
kinematic principle of comparative biology. It states that what is conserved
across a wide range of organisms is not lifespan itself, and not heart rate
itself, but the integrated physiological progression encoded in
$\thetabio_i(L_i)$.

\section{Thermodynamic Layer: Entropy Accumulates Uniformly in $\thetabio$}
\label{sec:thermo}

\subsection{Thermodynamic closure in biological proper time}

Living systems persist in a far-from-equilibrium state by continuously
exporting entropy to their surroundings. This idea, originally expressed
by \textit{Erwin Schr\"odinger} in his formulation of organisms as systems
that ``feed on negative entropy''~\cite{schrodinger1944}, was later given a
precise quantitative foundation within the framework of non-equilibrium
thermodynamics developed by \textit{Ilya Prigogine}~\cite{prigogine1967}.
In this perspective, the defining feature of life is not equilibrium
structure, but sustained entropy production coupled to energy throughput.

Within the PBTE framework, the entropy production rate of organism $i$
is expressed as a closure relation,
\begin{equation}
  \dot{e}_{p,i} = \sigma_{0,i}\, f_i,
\end{equation}
which states that entropy production is proportional to the intrinsic
physiological rate $f_i$, with $\sigma_{0,i}$ representing the entropy
generated per biological cycle. This relation encodes a central physical
assumption: each fundamental physiological cycle carries a characteristic
thermodynamic cost that is approximately constant for a given organism.

The significance of biological proper time becomes transparent when one
rewrites the total entropy production over the lifespan in terms of the
intrinsic coordinate $\thetabio_i$. Using the kinematic relation
$\mathrm{d}\thetabio_i = f_i\,\mathrm{d}t$, one obtains
\begin{equation}
  \Sigma_i = \int_0^{L_i} \dot{e}_{p,i}\,\mathrm{d}t
  = \int_0^{{\Nstar}_i} \frac{\sigma_{0,i} f_i}{f_i}\,\mathrm{d}\thetabio_i
  = \sigma_{0,i}\,{\Nstar}_i.
  \label{eq:entropy_theta}
\end{equation}
This transformation is not merely a change of variables; it reveals a
structural simplification. In chronological time, entropy production is
rate-dependent and varies across species and physiological states. In
biological proper time, however, the dependence on $f_i$ cancels exactly,
and the entropy budget becomes separable and linear.

Differentiating with respect to $\thetabio_i$ gives
\begin{equation}
  \frac{\mathrm{d}\Sigma_i}{\mathrm{d}\thetabio_i}
  = \sigma_{0,i} = \mathrm{const}.
  \label{eq:uniform_entropy}
\end{equation}
This result expresses the thermodynamic content of the theory in its
clearest form. Entropy accumulates at a constant rate when measured in
biological proper time, independent of how fast or slow the organism moves
through chronological time. The coordinate $\thetabio_i$ is therefore not
an arbitrary construction; it is the unique intrinsic parameter for which
the thermodynamic evolution becomes uniform. In this sense, it plays a role
directly analogous to proper time in relativistic dynamics, where physical
laws simplify when expressed in an intrinsic frame.

The lifetime relation $\Sigma_i = \sigma_{0,i}{\Nstar}_i$ immediately yields
\begin{equation}
  {\Nstar}_i = \frac{\Sigma_i}{\sigma_{0,i}},
\end{equation}
which provides a thermodynamic interpretation of the PBTE invariant.
The lifetime cycle budget is not an empirical coincidence, but the ratio
between the total entropy exported over life and the entropy cost per
physiological cycle. This identity is the bridge connecting metabolism,
entropy production, and lifespan.

\subsection{Thermodynamic classification of longevity mechanisms}

The relation $\Nstar = \Sigma/\sigma_0$ reveals that chronological lifespan
can be extended through two fundamentally distinct thermodynamic routes.
These routes are not merely descriptive categories; they correspond to
different physical manipulations of the underlying entropy balance.

The first route operates by altering the rate at which biological proper
time accumulates, while leaving the total lifetime budget unchanged. In
this case, the organism reduces its physiological pace $f_i$, so that
\begin{equation}
  \frac{\mathrm{d}\thetabio_i}{\mathrm{d}t} = f_i \;\;\text{decreases},
  \qquad \Nstar = \text{constant}.
\end{equation}
Because the total number of cycles remains fixed, a lower rate implies a
longer chronological duration required to reach the same terminal value.
The lifespan therefore increases according to
\begin{equation}
  L = \frac{\Nstar}{f_i \times 525{,}960}.
  \label{eq:class1}
\end{equation}
This mechanism may be interpreted as a form of biological time dilation.
The organism does not gain additional biological time; instead, it
traverses biological time more slowly. Hibernation in bats, diving
bradycardia in marine mammals, and caloric restriction in laboratory
models are all examples of this class. In each case, the system reduces
its instantaneous metabolic pace while preserving the total cycle budget.

The second route operates in a fundamentally different way. Instead of
slowing the passage through biological time, the organism alters the
thermodynamic cost of each cycle. A reduction in $\sigma_{0,i}$ lowers the
entropy generated per cycle, so that the same total entropy budget
corresponds to a larger number of cycles,
\begin{equation}
  \sigma_{0,i} \;\;\text{decreases},
  \qquad \Nstar = \frac{\Sigma}{\sigma_{0,i}} \;\;\text{increases}.
  \label{eq:class2}
\end{equation}
In this case, lifespan extension arises from an expansion of the biological
budget itself rather than a slowing of its accumulation. This mechanism is
associated with improved energetic efficiency and reduced dissipation at
the cellular or organismal level. Neural investment in primates,
mitochondrial coupling efficiency in birds, and enhanced oxidative
protection mechanisms are representative examples. Here, the organism does
not merely live more slowly; it lives more efficiently, accumulating less
entropy per cycle and therefore sustaining a greater total number of
cycles over its lifetime.

The distinction between these two mechanisms is central to the PBTE
framework. Class~1 modifies the mapping between chronological time and
biological time, whereas Class~2 modifies the thermodynamic cost structure
that defines the biological time budget itself.

\subsection{Experimental discrimination via epigenetic clocks}

A key strength of the thermodynamic formulation is that the two classes of
mechanism lead to experimentally distinguishable predictions. This can be
seen by considering epigenetic aging clocks, which provide an empirical
measure of biological age through DNA methylation patterns, as introduced
by \textit{Steve Horvath}~\cite{horvath2013}.

Define the epigenetic aging rate with respect to biological proper time,
\begin{equation}
  \dot{A}_{\rm epi}
  = \frac{\mathrm{d}(\text{methylation age})}{\mathrm{d}\thetabio_i}
  = \frac{1}{\sigma_{0,i}}
    \frac{\mathrm{d}(\text{methylation age})}{\mathrm{d}\Sigma_i}.
\end{equation}
This expression shows that epigenetic aging per unit biological time is
directly tied to entropy production per cycle.

For Class~1 mechanisms, where $\sigma_{0,i}$ remains unchanged and only the
rate $f_i$ is reduced, the epigenetic aging rate per unit biological proper
time is predicted to remain invariant. The organism accumulates biological
time more slowly in chronological units, but each unit of biological time
carries the same thermodynamic cost. Consequently, epigenetic age per
heartbeat should be identical between treated and control populations.

For Class~2 mechanisms, where $\sigma_{0,i}$ is reduced, the situation is
different. A lower entropy cost per cycle implies that epigenetic aging per
unit biological proper time must also decrease. In this case, the rate
$\dot{A}_{\rm epi}$ is reduced by a factor proportional to
$\Phi_C^{-1}$, reflecting the expanded biological budget.

These predictions provide a direct experimental test. Under caloric
restriction, which is a canonical Class~1 intervention, epigenetic aging
per heartbeat should remain unchanged relative to ad-libitum controls, as
observed in primate studies~\cite{colman2014}. In contrast, species or
conditions associated with enhanced thermodynamic efficiency should exhibit
reduced epigenetic aging per cycle. Thus the PBTE framework does not merely
classify longevity mechanisms; it generates falsifiable predictions that
can be tested using modern molecular biomarkers.

\section{Geometric Layer: The Biological Metric}
\label{sec:geometry}

\subsection{Biological metric, arc length, and the geometric meaning of PBTE}

Geometry provides a language in which conservation laws often appear in
their most transparent form. Whenever a physical quantity can be expressed
as the arc length of a trajectory, it acquires a natural invariance
structure, together with notions of optimality, deformation, and intrinsic
measurement. The PBTE invariant admits precisely such a geometric
interpretation.

We introduce the biological line element
\begin{equation}
  \mathrm{d}s_i^2 = \varphi\bigl(f_i(t)\bigr)\,\mathrm{d}t^2,
  \label{eq:line_element}
\end{equation}
where $\varphi(f)>0$ is a metric function defined on the space of
physiological rates. This definition is general and allows different
choices depending on what aspect of biological dynamics one wishes to
emphasize. The canonical and physically distinguished choice is
\begin{equation}
  \varphi(f_i)=f_i^2,
\end{equation}
for which the line element reduces to
\begin{equation}
  \mathrm{d}s_i = f_i(t)\,\mathrm{d}t = \mathrm{d}\thetabio_i.
  \label{eq:canonical_metric}
\end{equation}

This identification is fundamental. Biological arc length is identical to
biological proper time. The accumulated length of the life trajectory in
this metric is therefore
\begin{equation}
  s_i(L_i)=\int_0^{L_i} f_i(t)\,\mathrm{d}t = {\Nstar}_i,
  \label{eq:arc_length}
\end{equation}
which recovers the PBTE invariant as a purely geometric statement. The
lifetime of an organism is not characterized by its chronological duration
but by the intrinsic arc length of its trajectory in biological time.
Different species traverse this trajectory at different speeds, yet the
total length remains approximately fixed. A mouse and a bowhead whale
therefore follow curves of equal arc length in the biological metric, even
though their parametrization by chronological time differs dramatically.
This is the precise geometric content of the invariant.

\subsection{Intrinsic time and the controlled spacetime analogy}

The role played by $\thetabio_i$ invites comparison with proper time in
relativistic physics, but the analogy must be stated carefully to avoid
misinterpretation. In special relativity, the Minkowski interval defines an
intrinsic measure of time along a worldline that is invariant under Lorentz
transformations. In the present context, the biological metric defines an
intrinsic temporal coordinate that is invariant under reparametrizations of
chronological time. The quantity $\thetabio_i$ therefore functions as a
scalar measure of physiological progression that is independent of the
units in which external time is measured.

The similarity is structural rather than dynamical. There is no Lorentz
symmetry underlying the biological metric, and no universal limiting speed
analogous to the speed of light. The invariant $\Nstar$ is not a fundamental
constant of nature but a statistical quantity with clade-dependent
modulation. Furthermore, the biological metric is purely temporal and does
not involve any mixing of spatial and temporal coordinates. The analogy is
therefore useful only insofar as it emphasizes the existence of an intrinsic
time variable in which the dynamics simplifies; it should not be extended
beyond this limited scope.

\subsection{Metabolic frame transformations and invariance}

The geometric formulation naturally leads to a notion of transformation
between organisms viewed as different ``metabolic frames.'' Requiring that
a given physiological event correspond to the same increment of biological
arc length in two organisms implies
\begin{equation}
  \mathrm{d}t_j = \frac{\bar{f}_i}{\bar{f}_j}\,\mathrm{d}t_i.
  \label{eq:frame_transform}
\end{equation}
This relation expresses how chronological time must be rescaled so that the
same intrinsic biological increment is obtained. Integrating this relation
over the full lifespan yields
\begin{equation}
  \bar{f}_i L_i \approx \bar{f}_j L_j \approx \Nstar,
\end{equation}
which is simply the PBTE invariant written as a geometric identity. In this
sense, the invariance of $\Nstar$ is equivalent to invariance of arc length
under the class of metabolic transformations that rescale chronological
time while preserving biological progression.

\subsection{Nonstationarity as geometric curvature of the trajectory}

Real biological systems are not stationary. The physiological rate $f_i(t)$
changes over development, aging, and environmental perturbation. These
changes can be captured geometrically by examining the variation of the
metric along the trajectory. Defining
\begin{equation}
  R_i(t) = -\frac{1}{2}\frac{\mathrm{d}^2}{\mathrm{d}t^2}
  \ln\bigl[\varphi\bigl(f_i(t)\bigr)\bigr],
  \label{eq:R_nonstationarity}
\end{equation}
one obtains a scalar measure of nonstationarity with dimensions of
inverse time squared. When $f_i(t)$ is constant, the metric is uniform and
$R_i=0$. Deviations from zero quantify curvature-like effects in the
temporal trajectory.

Large positive values of $R_i$ correspond to accelerating biological time,
where the organism moves increasingly rapidly through its physiological
trajectory. Large negative values correspond to deceleration. These regions
are expected to correlate with periods of elevated thermodynamic activity,
such as rapid growth, stress responses, or pathological transitions.
Thus $R_i(t)$ provides a geometric diagnostic of physiological instability,
linking time-dependent dynamics to measurable biological states.

\subsection{Action functional and exact recovery of entropy production}

The geometric structure becomes fully consistent with the thermodynamic
layer when one defines a lifetime action functional by weighting the arc
length with the entropy cost per cycle,
\begin{equation}
  \mathcal{A}_i = \sigma_0 \int_0^{L_i} \sqrt{\varphi(f_i(t))}\,\mathrm{d}t.
  \label{eq:action}
\end{equation}
For the canonical choice $\varphi=f_i^2$, this reduces exactly to
\begin{equation}
  \mathcal{A}_i = \sigma_0 \int_0^{L_i} f_i(t)\,\mathrm{d}t
  = \sigma_0 {\Nstar}_i = \Sigma_i,
\end{equation}
which coincides with the total entropy produced over the lifespan. This is
a strong consistency result: the geometric arc length, when properly
weighted, reproduces the thermodynamic invariant derived earlier. Geometry
and thermodynamics are therefore not separate descriptions but two
representations of the same underlying structure.

\subsection{Information geometry and minimal dissipation trajectories}

The geometric picture can be extended further by considering the space of
internal physiological states. Let $p_i(x;\xi)$ denote the probability
distribution over microscopic configurations, parametrized by macroscopic
variables $\xi$. The Fisher information metric,
\begin{equation}
  g_{\mu\nu}(\xi) = \int p_i(x;\xi)
  \frac{\partial\ln p_i}{\partial\xi^\mu}
  \frac{\partial\ln p_i}{\partial\xi^\nu}\,\mathrm{d}x,
  \label{eq:fisher}
\end{equation}
defines a second, independent notion of distance, measuring how much the
internal state of the organism changes rather than how many cycles are
executed.

For systems evolving near a non-equilibrium steady state, the excess
entropy production rate satisfies the bound
\begin{equation}
  \dot{S}_i^{\rm exc}(t) \ge \frac{k_B}{4\tau_{\rm rel}}
  \, g_{\mu\nu}(\xi)\,\dot\xi^\mu \dot\xi^\nu,
  \label{eq:fisher_entropy}
\end{equation}
linking thermodynamic dissipation to motion in information space. This
relation suggests a natural optimality principle. Trajectories that
minimize the Fisher arc length,
\begin{equation}
  \delta \int_0^{L_i}
  \sqrt{g_{\mu\nu}\,\dot\xi^\mu\dot\xi^\nu}\,\mathrm{d}t \approx 0,
  \label{eq:geodesic}
\end{equation}
also minimize excess entropy production.

This leads to the geodesic optimality hypothesis: physiological regulation
tends to drive the system along approximately geodesic paths in information
space, thereby reducing dissipative losses. Deviations from these optimal
paths should manifest as increased entropy production and should correlate
with measurable biological indicators such as oxidative stress, chronic
inflammation, and accelerated aging. This prediction provides a direct
bridge between abstract geometric structure and experimentally observable
biomarkers.

\section{Relativistic Layer: The Transformation Group and Metabolic Spectrum}
\label{sec:relativity}

\subsection{Biological transformation law and its physical meaning}

Consider two organisms $i$ and $j$ evolving within the same chronological
time coordinate $t$. Their biological proper times are defined through
$\mathrm{d}\thetabio_i = f_i(t)\,\mathrm{d}t$ and
$\mathrm{d}\thetabio_j = f_j(t)\,\mathrm{d}t$. Taking the ratio of these
increments gives
\begin{equation}
  \frac{\mathrm{d}\thetabio_i}{\mathrm{d}\thetabio_j}
  = \frac{f_i(t)}{f_j(t)}.
  \label{eq:transform_law}
\end{equation}

This relation is the biological transformation law. It expresses how two
organisms calibrate their intrinsic clocks relative to one another at the
same chronological instant. The ratio $f_i/f_j$ is therefore not merely a
numerical comparison of heart rates; it is the Jacobian relating two
intrinsic temporal parametrisations of the same external time axis.

The interpretation must be stated carefully. A larger $f_i$ does not mean
that organism $i$ possesses more biological time, but that it traverses
biological time more rapidly. Thus, if $f_i > f_j$, then
$\thetabio_i$ advances faster than $\thetabio_j$ with respect to the same
chronological increment. The distinction between rate and total accumulated
time is essential and removes the ambiguity that often appears in naive
interpretations of interspecies comparisons.

\subsection{Group structure and invariance}

The transformation law possesses a simple algebraic structure. For three
organisms $i$, $j$, and $k$, one finds
\begin{equation}
  \frac{\mathrm{d}\thetabio_i}{\mathrm{d}\thetabio_k}
  = \frac{f_i}{f_k}
  = \frac{f_i}{f_j}\cdot\frac{f_j}{f_k},
  \label{eq:group_composition}
\end{equation}
showing that transformations compose multiplicatively. The set of all
such ratios forms a one-dimensional abelian group under multiplication,
isomorphic to $(\mathbb{R}_{>0},\times)$. This group may be interpreted as
the group of metabolic time rescalings.

Within this structure, the PBTE invariant $\Nstar$ plays the role of a
conserved scalar. While chronological time differs between organisms and
their intrinsic parametrisations differ by elements of the scaling group,
the total biological arc length remains approximately invariant. In this
sense, $\Nstar$ is the quantity that is preserved under metabolic frame
transformations, just as proper time is invariant under changes of
parametrisation in relativistic kinematics.

\subsection{Quantitative illustration: mouse and elephant}

The transformation law acquires concrete meaning when applied to
well-characterised species. For a mouse with
$\bar{f}_H \approx 600\,\mathrm{bpm}$ and an elephant with
$\bar{f}_H \approx 30\,\mathrm{bpm}$, one obtains
\begin{equation}
  \frac{\mathrm{d}\thetabio_m}{\mathrm{d}\thetabio_E}
  = \frac{600}{30} = 20.
  \label{eq:mouse_elephant}
\end{equation}
Thus, per unit chronological time, the mouse advances through biological
proper time twenty times faster than the elephant.

Using the geometric relation established earlier,
\begin{equation}
  L = \frac{\Nstar}{f \times 525{,}960},
\end{equation}
one immediately obtains the inverse scaling of lifespans,
\begin{equation}
  \frac{L_E}{L_m} \approx \frac{f_m}{f_E} = 20,
\end{equation}
which yields $L_E \approx 64\,\mathrm{yr}$ for a mouse lifespan of
$3.2\,\mathrm{yr}$, in excellent agreement with observed elephant
lifespans of approximately $65\,\mathrm{yr}$.

The key point is conceptual. The mouse does not experience a shorter
biological lifetime in intrinsic units; rather, it traverses the same
biological arc length $\Nstar \sim 10^9$ at a much higher rate. The
difference between species lies in the speed along the trajectory, not in
the total distance covered. This is the precise meaning of the PBTE
invariant within the transformation framework.

\subsection{The metabolic spectrum and scaling structure}

To organise the wide range of biological rates, it is useful to introduce a
dimensionless scaling parameter
\begin{equation}
  \gamma_i = \frac{f_i}{f_{\rm ref}},
  \qquad f_{\rm ref} = 0.5\,\mathrm{Hz},
  \label{eq:gamma}
\end{equation}
where the reference value corresponds to the resting cardiac frequency of
a large mammal such as an elephant.

The biological realisation of $\gamma_i$ spans several orders of magnitude,
from extremely slow replication cycles in viruses to rapid metabolic
cycling in small endotherms. Viral systems correspond to
$\gamma \sim 10^{-5}$--$10^{-4}$, reflecting replication times of hours to
days. Bacterial systems occupy the range
$\gamma \sim 10^{-3}$--$10^{-2}$, consistent with division times on the
scale of tens of minutes to hours. Large mammals cluster near
$\gamma \sim 1$, while humans lie slightly above this scale with
$\gamma \sim 2$. Small mammals such as mice and shrews extend into the
range $\gamma \sim 10$--$10^2$, reflecting their high metabolic rates.

This broad distribution defines what may be called the metabolic-temporal
spectrum: a continuous scaling structure that organises biological systems
according to their intrinsic rates of physiological progression. Each
organism corresponds to a trajectory parametrised by its own value of
$\gamma_i$, and the ensemble of all such trajectories forms a structured
region in the space of biological evolution.

\subsection{The biological light-cone as an empirical envelope}

The collection of all biologically realised trajectories defines a
fan-shaped region in the space of chronological versus biological time.
This region may be referred to as the biological light-cone. The analogy is
again structural rather than literal. In relativistic physics, the
light-cone defines a strict causal boundary enforced by the invariance of
the speed of light. In the biological context, no such fundamental bound
exists.

Instead, the biological light-cone is an empirical envelope determined by
biochemical, physiological, and ecological constraints. It delineates the
range of metabolic rates that are viable for living systems. Organisms
within this envelope exhibit stable physiological operation, whereas
trajectories that would lie far outside it are not realised in nature.
Thus the biological light-cone is best understood as a domain of
feasibility rather than a causal structure.

The introduction of this envelope provides a unifying picture of the
metabolic spectrum. It situates individual organisms within a global
structure and highlights the role of scaling laws in constraining
biological diversity. At the same time, it emphasizes that the PBTE
invariant operates across this entire spectrum: despite vast differences
in metabolic rate, organisms traverse approximately the same intrinsic
biological distance over their lifetimes.

\section{Pathological Layer: The Temporal Order Parameter}
\label{sec:pathology}

\subsection{Temporal order parameter and thermodynamic meaning}

To quantify the progression of an organism through its intrinsic lifetime,
we introduce the temporal order parameter
\begin{equation}
  \zeta(t) = \frac{N(t)}{\Nstar}
  = \frac{1}{\Nstar}\int_0^t f(t')\,\mathrm{d}t'
  = \frac{\Sigma(t)}{\Sigma_\star},
  \label{eq:zeta}
\end{equation}
which simultaneously represents the fraction of the biological cycle
budget and the fraction of the lifetime entropy budget consumed up to
chronological time $t$. This dual representation is not coincidental; it
follows directly from the thermodynamic identity
$\Sigma = \sigma_0 \Nstar$, and therefore encodes both kinematic and
thermodynamic information in a single scalar quantity.

The interpretation is immediate. At birth, $\zeta(0)=0$, and for a system
evolving along the reference PBTE manifold one has $\zeta(L)=1$. The value
of $\zeta$ therefore provides a dimensionless measure of biological age
that is intrinsic and comparable across species. Unlike chronological age,
it reflects how much of the organism’s thermodynamic and physiological
budget has been expended.

Three qualitative regimes naturally emerge from this definition. When
$\zeta$ remains close to unity at the end of life, the organism follows a
homeostatic trajectory consistent with the canonical invariant. When
$\zeta$ exceeds unity, the organism has progressed too rapidly through its
biological budget, corresponding to accelerated aging and premature
depletion of physiological resources. When $\zeta$ remains below unity,
the organism has progressed more slowly, corresponding to extended
longevity and delayed aging. These regimes are not arbitrary categories but
direct consequences of the ratio structure defining $\zeta$.

\subsection{Empirical interpretation across physiological conditions}

The order parameter provides a unified interpretation of a wide range of
empirical observations. Elevated resting heart rate in mammals is well
known to correlate with reduced lifespan. Within the present framework,
this corresponds to an increase in the instantaneous rate $f(t)$, which
causes the integral $\int_0^t f(t')\,\mathrm{d}t'$ to grow more rapidly,
driving $\zeta(t)$ toward unity at an earlier chronological time. Lifespan
compression is therefore a direct consequence of accelerated traversal of
biological proper time.

Chronic inflammation produces a similar effect through a different
mechanism. Systemic inflammatory states elevate metabolic demand and
increase entropy production, effectively increasing the rate of biological
time accumulation. Epigenetic clocks, which track biological age through
methylation patterns, are observed to advance more rapidly under such
conditions. This can be interpreted as an increase in $\mathrm{d}\zeta /
\mathrm{d}t$, reflecting accelerated consumption of the entropy budget.

In contrast, hibernation and torpor provide a clear example of the opposite
regime. During torpor, heart rate and metabolic activity are suppressed by
orders of magnitude, drastically reducing $f(t)$. As a result, biological
proper time accumulates very slowly, and the chronological time required to
reach $\zeta=1$ is extended. This corresponds to $\zeta_{\rm life}<1$ at
comparable chronological ages, placing the organism in the extended
longevity regime.

These examples illustrate that $\zeta(t)$ is not merely a theoretical
construct but a quantity with direct physiological interpretation across a
range of biological states.

\subsection{Dynamics near the reference manifold}

To describe deviations from the canonical trajectory, define the fractional
deviation
\begin{equation}
  \phi(t) = \zeta(t) - \frac{t}{L_{\rm exp}},
\end{equation}
which measures the difference between the actual trajectory and the
reference linear progression expected under steady conditions.

Near the reference manifold, the dynamics of $\phi(t)$ may be approximated
by a linear stochastic differential equation of Ornstein--Uhlenbeck type,
\begin{equation}
  \dot\phi(t) = -\frac{\phi(t)}{\tau_\zeta} + \eta(t) + u(t),
  \label{eq:OU}
\end{equation}
where $\tau_\zeta$ is a characteristic resilience timescale,
$\eta(t)$ represents stochastic physiological fluctuations, and $u(t)$
represents sustained external or internal forcing, such as chronic stress
or therapeutic intervention.

The stationary power spectral density associated with this process is
\begin{equation}
  S_\phi(\omega) = \frac{2D_\eta \tau_\zeta^2}{1 + \omega^2 \tau_\zeta^2},
  \label{eq:PSD}
\end{equation}
which has a Lorentzian form with characteristic frequency
$\omega_c = 1/\tau_\zeta$. As the system approaches loss of stability,
$\tau_\zeta$ increases and the spectrum shifts toward lower frequencies,
producing $1/f$-like behaviour. This provides a direct and testable
prediction: physiological time series should exhibit critical slowing down
as the system approaches pathological transitions.

\subsection{Spatial extension and local temporal symmetry breaking}

In multicellular systems, the temporal order parameter need not be uniform
across space. Local tissues can evolve with intrinsic rates that differ
from the organism-wide average. Introducing a spatially resolved order
parameter,
\begin{equation}
  \zeta(x,t) = \frac{f(x,t)\,\tau(x,t)}{\Nstar},
  \label{eq:local_zeta}
\end{equation}
one may model its dynamics through a reaction--diffusion equation,
\begin{equation}
  \tau_\zeta\,\partial_t \zeta
  = -(\zeta - 1) + D_\zeta \nabla^2 \zeta + S(x,t),
  \label{eq:reaction_diffusion}
\end{equation}
where $D_\zeta$ characterises spatial coupling and
$\ell_\zeta = \sqrt{D_\zeta \tau_\zeta}$ defines a characteristic healing
length.

This formulation allows a unified description of pathological phenomena as
local symmetry breaking in temporal coordination. In cancer, for example,
cells exhibit elevated metabolic rates and increased substrate throughput,
leading to locally enhanced entropy production. This corresponds to
$\zeta(x,t)\gg 1$ within tumour regions. The resulting spatial gradients
disrupt coordination with surrounding tissue, contributing to loss of
homeostasis and uncontrolled growth.

Viral latency represents the opposite extreme. During latency, viral
replication is effectively halted, corresponding to $f_V \approx 0$ and
therefore $\zeta_V \ll 1$. The viral system is essentially frozen in
biological time while the host continues to evolve. Reactivation corresponds
to a transition in which the local order parameter returns toward unity,
restoring active progression. Therapeutic strategies such as
``shock and kill'' may be interpreted as controlled transitions in this
temporal order parameter landscape.

\subsection{Temporal phases and critical transitions}

The order parameter naturally defines a phase structure. The homeostatic
phase corresponds to trajectories near $\zeta \approx 1$, where regulatory
mechanisms maintain stability and perturbations decay on the timescale
$\tau_\zeta$. The hypertemporal phase corresponds to $\zeta > 1$, where
accelerated progression leads to premature depletion of the biological
budget, as observed in chronic disease, inflammation, and advanced aging.
The hypotemporal phase corresponds to $\zeta < 1$, where progression is
suppressed, as in torpor, caloric restriction, or quiescent cellular
states.

Near the transition between the homeostatic and hypertemporal regimes, the
susceptibility
\begin{equation}
  \chi_\zeta = \frac{\partial\langle \zeta \rangle}{\partial \Phi}
\end{equation}
increases and the relaxation time $\tau_\zeta$ diverges, signalling
critical slowing down. This provides a concrete prediction: physiological
time series should exhibit increased variance and long-range correlations
prior to pathological transitions. Such signatures are measurable in
longitudinal datasets and offer a route toward early detection of systemic
instability.

\subsection{Chronotherapeutic control and restoration}

Restoring temporal balance may be formulated as an optimal control problem.
Consider the functional
\begin{equation}
  J[u] = \int_0^T \bigl[\phi(t)^2 + \lambda\,u(t)^2\bigr]\,\mathrm{d}t,
  \label{eq:optimal_control}
\end{equation}
which penalises both deviation from the reference trajectory and the cost
of intervention. The optimal control $u^*(t)$ takes the form of a linear
feedback that drives $\phi(t)$ toward zero.

This abstract formulation has direct biological realisations. Interventions
such as circadian entrainment through timed light exposure and feeding,
caloric restriction, mild temperature modulation, and pharmacological
targeting of metabolic pathways can all be interpreted as control inputs
that modify $f(t)$ and hence $\zeta(t)$. The integration of real-time
physiological monitoring with such control strategies suggests the
possibility of a closed-loop chronotherapeutic system, in which the
temporal order parameter is continuously estimated and regulated to
maintain homeostasis.

\section{The Operational Protocol: Measuring Biological Age}
\label{sec:protocol}

\subsection{Individual biological clock estimation}

The formal framework developed in the preceding sections becomes
operational when expressed in terms of measurable physiological
quantities. For an individual characterised by a lifelong mean resting
heart rate $\bar{f}_H$ and current chronological age $t_{\rm now}$,
the accumulated biological proper time is estimated as
\begin{equation}
  \thetabio_{\rm acc} = \bar{f}_H \cdot t_{\rm now} \cdot 525{,}960,
  \label{eq:theta_est}
\end{equation}
where the numerical factor converts years into minutes, ensuring
consistency with the units of heart rate. This expression represents
the total number of effective physiological cycles executed up to the
present age under the approximation that the mean rate is representative
of lifetime dynamics.

The corresponding temporal order parameter at the present age is then
\begin{equation}
  \zeta(t_{\rm now}) = \frac{\thetabio_{\rm acc}}{\Nstar},
  \qquad \Nstar = N_0 \Phi_C,
  \label{eq:zeta_est}
\end{equation}
which provides a direct estimate of the fraction of the intrinsic
biological and thermodynamic budget that has been consumed. This
quantity is dimensionless and directly comparable across individuals
and species once the appropriate clade correction $\Phi_C$ is applied.

From the same relation, one obtains an estimate of the remaining
thermodynamic lifespan by solving for the time required to exhaust
the remaining budget at the current physiological rate,
\begin{equation}
  L_{\rm remaining}
  = \frac{\Nstar - \thetabio_{\rm acc}}{525{,}960\,f_H^{\rm current}}.
  \label{eq:remaining}
\end{equation}
This expression highlights an important asymmetry: the accumulated
biological time depends on the lifetime average rate, whereas the
future trajectory depends on the current physiological state. Thus,
changes in lifestyle or intervention that alter $f_H^{\rm current}$
have an immediate and quantitatively predictable effect on the
remaining lifespan within the PBTE framework.

\subsection{Worked examples}

The abstract definitions acquire concrete meaning when evaluated
numerically across species and individuals.

\paragraph{Example 1: Comparison across species at chronological age 1~yr.}
Table~\ref{tab:species_comparison} reports the biological clock
fraction $\zeta(1\,\text{yr})$ for four representative species.
The accumulated biological proper time over one year is given by
$\thetabio(1\,\text{yr}) = \bar{f}_H \times 525{,}960$, and the
corresponding lifetime budget $\Nstar$ incorporates the clade
multiplier $\Phi_C$.

\begin{table}[htbp]
\centering\small
\renewcommand{\arraystretch}{1.3}
\caption{\textbf{Biological clock at chronological age 1~yr across four species.}
$\thetabio(1\,\text{yr}) = \bar{f}_H \times 525{,}960$ cycles.
$\Nstar = N_0 \Phi_C$ where $N_0 = 10^9$ and $\Phi_C$ is the clade
multiplier from Papers~3--4.
$\zeta(1) = \thetabio(1)/\Nstar$ is the fraction of the lifetime
thermodynamic budget consumed by chronological age 1~yr.
At this single chronological age, the mouse has used 32\% of its budget;
the human 1\%; illustrating the compression/expansion of biological
time across species.}
\label{tab:species_comparison}
\begin{tabular}{l r r r r r}
\toprule
Species & $\bar{f}_H$ (bpm) & $\Phi_C$ &
  $\thetabio(1\,\text{yr})$ &
  $\Nstar$ & $\zeta(1)$ \\
 & & & ($\times 10^8$) & ($\times 10^9$) & (\% used) \\
\midrule
Mouse    & 600 & 1.0 & 3.15 & 1.00 & 0.315 \,(32\%) \\
Dog      & 100 & 1.7 & 0.526 & 1.70 & 0.031 \,(3\%) \\
Human    &  70 & 2.5 & 0.368 & 2.50 & 0.015 \,(1\%) \\
Elephant &  28 & 1.0 & 0.147 & 1.00 & 0.015 \,(1.5\%) \\
\bottomrule
\end{tabular}
\end{table}

The table makes the central conceptual point transparent. At the same
chronological age, organisms are located at vastly different positions
along their intrinsic biological trajectories. The mouse has already
consumed nearly one third of its lifetime budget after a single year,
whereas the human has consumed only one percent. Chronological time
therefore fails as a universal aging coordinate, while biological proper
time provides a consistent measure across species.

\paragraph{Example 2: 70-year-old humans with different lifelong $\bar{f}_H$.}

Consider two individuals of the same chronological age but with different
lifelong average heart rates. For an individual with
$\bar{f}_H = 60\,\mathrm{bpm}$, characteristic of endurance-trained
physiology, one obtains
\begin{equation}
  \thetabio_{\rm acc}
  = 60 \times 70 \times 525{,}960 = 2.21\times10^9,
\end{equation}
which gives $\zeta(70)=2.21/2.5=0.88$ and a remaining lifespan of
approximately $13$ years under the current rate.

For an individual with $\bar{f}_H = 80\,\mathrm{bpm}$, representing a
chronically elevated resting heart rate, one finds
\begin{equation}
  \thetabio_{\rm acc} = 2.95\times10^9 > \Nstar,
\end{equation}
indicating that the biological budget has effectively been exhausted.
This aligns with epidemiological evidence that persistently elevated
resting heart rate is associated with substantially increased mortality
risk. Within the PBTE framework, this is interpreted not as an isolated
risk factor but as a direct manifestation of accelerated traversal of
biological proper time.

\section{Discussion: Biological Proper Time Framework}
\label{sec:p5discussion}

\subsection{Biological proper time versus chronological time}

The central shift introduced by the PBTE framework is the replacement
of chronological time as the primary independent variable in aging
analysis. Chronological time is externally defined and uniform across
observers, but biological systems do not evolve uniformly with respect
to it. Two organisms at the same chronological age may be at entirely
different stages of their intrinsic physiological progression.

This discrepancy is not a minor correction but a structural limitation
of conventional approaches. As illustrated in
Table~\ref{tab:species_comparison}, a one-year-old mouse and a
one-year-old human occupy radically different positions along their
lifetime trajectories. Analyses that regress biological variables
directly against chronological time therefore mix fundamentally
incommensurate states.

The PBTE resolution is provided by the thermodynamic relation
\begin{equation}
  \frac{\mathrm{d}\Sigma}{\mathrm{d}\thetabio} = \sigma_0 = \mathrm{const},
\end{equation}
which establishes biological proper time as the coordinate in which
entropy accumulation is uniform across organisms. In this coordinate,
the biological clock advances at the same thermodynamic rate for all
species. Consequently, aging biomarkers plotted against $\thetabio$
rather than chronological time should collapse onto universal curves.

The most direct empirical test arises from epigenetic clocks.
If $\thetabio$ is the correct aging coordinate, methylation-based
aging measures should exhibit a species-independent linear relation
with slope unity in biological proper time. Existing datasets already
provide partial evidence for this prediction and offer a clear path
for further validation.

\subsection{Interpretation of the five-layer structure}

The framework is organised into five conceptual layers, each of which
adds a distinct element to the theory while remaining consistent with
the preceding structure.

The kinematic layer establishes the definition of biological proper
time and demonstrates its empirical relevance through the lifetime
invariant. It identifies the correct intrinsic coordinate but does
not yet explain why this coordinate is physically privileged.

The thermodynamic layer provides that justification by showing that
entropy accumulation is uniform in $\thetabio$. This elevates the
coordinate from a descriptive construct to a physically meaningful
quantity tied directly to the second law of thermodynamics.

The geometric layer reveals that the invariant corresponds to the arc
length of a trajectory in a well-defined metric space. This recasts the
lifetime constraint as a geometric conservation law and introduces tools
such as curvature and geodesics for analysing deviations.

The relativistic layer formalises the transformation properties between
organisms. The ratio $f_i/f_j$ becomes a precise Jacobian relating
intrinsic clocks, and the set of such transformations forms a scaling
group under which the invariant $\Nstar$ is preserved.

The pathological layer introduces deviations from the reference
manifold through the temporal order parameter $\zeta(t)$. It connects
the abstract structure to measurable physiological quantities and
predicts dynamical signatures such as critical slowing down, thereby
opening a path to clinical application.

\subsection{Connection to the free-energy principle}

The PBTE framework naturally interfaces with existing theoretical
approaches to biological organisation. In particular, Friston's
free-energy principle, which posits that organisms minimise variational
free energy, can be reinterpreted within the present structure.
Minimising free energy corresponds, under broad conditions, to minimising
excess entropy production. In the PBTE framework, this is equivalent to
following geodesic trajectories in the Fisher information metric over
biological proper time.

Thus the free-energy principle operates within the temporal and
thermodynamic substrate defined by PBTE. The former describes how
organisms regulate their internal models, while the latter specifies
the intrinsic temporal coordinate and energetic constraints under which
that regulation occurs.

\subsection{Limitations and scope}

The current formulation has several limitations that define its scope.
The canonical metric $\varphi(f)=f^2$ treats all physiological cycles
equally, neglecting variation in metabolic intensity across different
states such as rest and exertion. A more refined formulation would weight
cycles by their entropy cost, requiring detailed metabolic measurements.

The estimation of $\zeta(t)$ at the individual level depends on accurate
heart-rate data over extended periods. While modern wearable devices
provide sufficient precision for practical applications, population-level
datasets based on averaged values may obscure important temporal
variations.

Finally, the extension of the framework beyond endothermic vertebrates
remains an open problem. While the transformation law and scaling
structure can be formally extended to microorganisms and viruses, the
empirical validation of a strict lifetime invariant in these domains is
currently limited and requires dedicated investigation.

\section{Synthesis: Biological Proper Time}
\label{sec:p5conclusions}

Biological proper time $\thetabio_i(t)=\int_0^t f_i(t')\,\mathrm{d}t'$
emerges, within the present framework, not as a convenient redefinition
of chronological time but as the intrinsic temporal coordinate that
organises the dynamics of living systems. It is a measurable,
dimensionless observable whose structure, when examined across the
five formal layers developed in this work, reveals the organism as a
thermodynamic system evolving along a trajectory constrained by a
finite dissipative budget.

At the kinematic level, $\thetabio_i$ is invariant under
reparametrisation of chronological time and is directly measurable
through integration of physiological rates such as heart frequency.
Its most striking empirical property is developmental universality:
major life-history events—sexual maturity, onset of senescence, and
peak reproductive capacity—occur at approximately fixed fractions of
the normalised coordinate $\thetahat_i \in [0,1]$ across species whose
chronological lifespans differ by orders of magnitude. This universality,
which is completely obscured in chronological time, provides the first
empirical indication that $\thetabio$ is the appropriate biological clock.

At the thermodynamic level, the PBTE closure
$\dot{e}_p=\sigma_0 f_i$ implies that entropy production becomes
uniform when expressed in biological proper time. The relation
$\mathrm{d}\Sigma/\mathrm{d}\thetabio=\sigma_0=\mathrm{const}$ holds
independently of metabolic rate, making $\thetabio$ the unique
temporal coordinate in which the entropy budget of an organism is
linearly accumulated. This uniqueness elevates biological proper time
from a descriptive parameter to a physically privileged observable.
The resulting distinction between Class~1 mechanisms, which alter the
rate of traversal of $\thetabio$, and Class~2 mechanisms, which alter
the entropy cost per cycle, provides a direct and experimentally
accessible test using longitudinal epigenetic datasets.

At the geometric level, the lifetime invariant $\Nstar \approx 10^9$
is identified with the arc length of the organism’s trajectory in the
biological metric $\mathrm{d}s_i^2=f_i^2(t)\,\mathrm{d}t^2$. The
invariance of $\Nstar$ is therefore not merely empirical but reflects
a geometric conservation law. The introduction of the nonstationarity
index $R_i(t)$ provides a diagnostic of temporal curvature in the
trajectory, allowing identification of physiological instability
without direct calorimetric measurement of $\sigma_0$. The extension
to information geometry through the Fisher metric further links the
framework to optimal regulation, where geodesic trajectories minimise
excess entropy production.

At the relativistic level, the transformation law
$\mathrm{d}\thetabio_i/\mathrm{d}\thetabio_j=f_i/f_j$ defines an
abelian scaling group acting on biological time. The invariant
$\Nstar$ appears as the conserved scalar under this group, establishing
a symmetry principle underlying the PBTE framework. The metabolic-temporal
spectrum, spanning organisms from viruses to whales, defines an empirical
envelope of feasible biological rates. The analogy to a light-cone is
structural: it organises the space of trajectories but does not impose a
fundamental causal limit.

At the pathological level, the temporal order parameter
$\zeta(t)=N(t)/\Nstar$ provides a unifying description of aging,
disease, and intervention. Hypertemporal states, characterised by
$\zeta>1$, correspond to accelerated consumption of the biological
budget and include chronic inflammation, cancer, and premature aging.
Hypotemporal states, with $\zeta<1$, correspond to suppressed metabolic
activity and include torpor, caloric restriction, and viral latency.
The Ornstein--Uhlenbeck dynamics of deviations from the reference
trajectory predict measurable spectral signatures, including critical
slowing down near transitions. The formulation of chronotherapy as an
optimal-control problem provides a direct pathway from theory to
practical intervention.

The central contribution of this work is therefore not a single result
but the identification of a coherent structure linking empirical
regularity, thermodynamic constraint, geometric interpretation, symmetry
principle, and pathological deviation. An organism does not simply exist
for a number of years; it evolves along a thermodynamically defined
worldline of fixed arc length, traversed at a rate determined by its
physiology and modifiable by environmental and biological factors.

Two predictions follow directly from this structure. First, epigenetic
aging clocks, when expressed in biological proper time, should collapse
onto a universal relation with slope unity across mammalian species.
Second, Class~1 and Class~2 interventions should produce distinguishable
signatures in aging rates measured per physiological cycle, providing
a clear experimental test of the framework.

\section{The Untested Assumption and the Required Experiment}
\label{sec:test}

The central limitation of the PBTE framework lies in the current
status of its key closure assumption. The approximate constancy of
$\sigstar$ is inferred from the observed constancy of $\Nstar$, leading
to a circular logical structure in which the explanation relies on the
phenomenon it seeks to explain. This circularity cannot be resolved by
allometric consistency arguments or theoretical plausibility alone.

A decisive test requires direct and independent measurement of the
relevant quantities across species. Specifically, simultaneous
measurements of metabolic power $P_i$, physiological frequency $f_i$,
body temperature $T_i$, and body mass $M_i$ must be obtained for a
sufficiently large and diverse set of species. The quantity
$\sigstar = P_i/(T_i f_i M_i)$ is then computed independently for each
species, and its dependence on body mass is tested statistically.

If $\sigstar$ is found to be mass-independent within experimental
uncertainty, the closure assumption is validated and the PBTE framework
acquires the status of a genuine physical explanation. If significant
scaling is observed, the closure must be revised and the invariant
reinterpreted. Until such measurements are performed, the framework
remains a thermodynamically motivated and internally consistent
parametrisation supported by empirical concordance but not yet fully
validated as a fundamental law.

\section{Discussion}
\label{sec:discussion}

The dimensional analysis leading to $\sigstar \propto M^0$ is
straightforward, but the PBTE framework extends far beyond this
observation. It provides a physical interpretation of $\sigstar$ as the
entropy cost per physiological cycle, introduces a closure relation
linking entropy production to physiological rate, establishes a
thermodynamic identity connecting lifetime cycle count to dissipative
budget, and defines biological proper time as a universal coordinate
for physiological processes.

It further introduces a hierarchical structure explaining the limits
of classical scaling laws, provides predictive relations for lifespan
interventions, and connects entropy accumulation to aging dynamics
consistent with observed mortality laws. These elements collectively
form a coherent theoretical framework that cannot be reduced to
dimensional analysis alone.

The concept of biological proper time relates to earlier notions of
physiological time but is distinguished by its explicit connection to
entropy production. This connection gives the invariant a physical
foundation and transforms it from an empirical observation into a
thermodynamic principle.

Nevertheless, limitations remain. The closure assumption has not been
independently verified, the allometric exponents are taken as empirical
inputs, and certain clade-specific mechanisms remain only partially
explained. The aging model, while consistent with observed patterns,
remains phenomenological pending a detailed microscopic derivation.

\section{Conclusions}
\label{sec:conclusions}

We have developed a thermodynamically grounded framework for the
vertebrate lifetime cycle invariant $\Nstar \approx 10^9$. The central
relation $\Nstar=\Sigma/\sigma_0$ interprets lifetime as a dissipative
budget divided by an entropy cost per cycle, and biological proper time
$\theta_i(t)=\int_0^t f_i\,\mathrm{d}t'$ emerges as the intrinsic
coordinate governing this process.

The invariance of $\Nstar$ reflects a balance between metabolic rate
and lifespan, such that all organisms traverse approximately the same
biological distance regardless of chronological duration. Deviations
from this invariant encode meaningful biological mechanisms and provide
a framework for understanding aging, disease, and intervention.

The theory makes clear, testable predictions and identifies the key
experiment required for validation. Its structure integrates empirical
regularities, thermodynamic constraints, geometric interpretation, and
dynamical modeling into a unified description of biological time.
\section*{Methods}

\noindent\textbf{Numerical estimates.}
All calculations use Kleiber's law~\cite{kleiber1932}
$P=3.4\,M^{0.75}$\,W, Calder cardiac allometry~\cite{calder1984}
$f=241\,M^{-0.25}$\,bpm $=4.017\,M^{-0.25}$\,Hz, and $T=310$\,K
throughout, yielding $\sigstar=2.73\times10^{-3}$
J\,K$^{-1}$\,beat$^{-1}$\,kg$^{-1}$.  The empirical mean
$\sigstar=(3.0\pm0.5)\times10^{-3}$ from Table~\ref{tab:sigma} is
adopted for all numerical predictions, giving $N_0\approx1.52\times10^9$.

\noindent\textbf{Data sources.}
Species heart-rate and lifespan data from the AnAge database~\cite{anage2023,taye_p1}.
Body temperatures from Schmidt-Nielsen (1984)~\cite{calder1984}.
Bird allometric parameters from Calder~\cite{calder1984} and Prinzinger et al.~\cite{prinzinger1991}.
Primate metabolic suppression factor from Pontzer et al.~\cite{pontzer2014,yegian2024}.
Proton-leak fractions from Hulbert et al.~\cite{hulbert2007} and Brand et al.~\cite{brand2000}.
Cetacean dive heart-rate data from Goldbogen et al.~\cite{goldbogen2019} and
Williams et al.~\cite{williams2015}.

\noindent\textbf{Phylogenetically independent contrasts (PIC).}
PIC analysis used the Felsenstein~\cite{felsenstein1985} method (\texttt{ape::pic()} in R~4.3)
on the Bininda-Emonds mammal supertree~\cite{bininda2007}, pruned to the
112 endotherm species in the dataset.  PIC regression was fitted through the
origin as required by the method.

\noindent\textbf{Arrhenius correction.}
Ectotherm heart rates were corrected to $T_{\rm ref}=310$\,K using
$f_H^{\rm corr}=f_H^{\rm raw}\exp[(E_a/k_B)(1/T_{\rm field}-1/T_{\rm ref})]$
with $E_a=0.65$\,eV following Gillooly et al.~\cite{gillooly2001}.

\noindent\textbf{Simulated figures.}
Figure~\ref{fig:proper_time} (biological proper time trajectories)
and Figure~\ref{fig:epigenetic} (epigenetic clock schematic) use
theoretical parameter values; no statistical conclusions should be
drawn from them.  Figure~\ref{fig:clade_scatter} uses real species
data from AnAge.  All simulated scatter uses fixed random seed (seed\,=\,42).

\noindent\textbf{Data availability.}
No new data were generated in this study.

\noindent\textbf{Competing interests.} None declared.

\noindent\textbf{Acknowledgements.} [To be completed.]



\section*{Appendix A. Detailed Derivation of the Entropy Cost per Beat and the Cycle-Count Scaling Law}
\addcontentsline{toc}{section}{Appendix A. Detailed Derivation of the Entropy Cost per Beat and the Cycle-Count Scaling Law}

This appendix gives a detailed derivation of the entropy-per-beat representation, the lifetime cycle-count relation, and the power-law dependence of lifetime cardiac cycles on the control parameter \(\phi\). The purpose is to make explicit each mathematical step connecting the instantaneous entropy production rate to the total lifetime cycle budget.

\subsection*{A.1. Instantaneous entropy production and change of variable from time to beat count}

Let \(t\) denote chronological time and let \(n\) denote the cumulative cardiac cycle count. The cardiac frequency is
\begin{equation}
f_H(t)=\frac{dn}{dt},
\label{eq:A1_fH_def}
\end{equation}
so that
\begin{equation}
dn = f_H(t)\,dt,
\qquad
dt=\frac{dn}{f_H}.
\label{eq:A2_dt_dn}
\end{equation}
Here \(f_H\) has units of cycles per unit time. Let \(\sigma(t)\) be the instantaneous entropy production rate, with units of entropy per unit time. The total entropy produced over an infinitesimal interval \(dt\) is
\begin{equation}
d\Sigma = \sigma(t)\,dt.
\label{eq:A3_dSigma_dt}
\end{equation}

Using \eqref{eq:A2_dt_dn}, we rewrite this increment in terms of the beat-count variable:
\begin{equation}
d\Sigma = \sigma(t)\,\frac{dn}{f_H(t)}.
\label{eq:A4_dSigma_dn}
\end{equation}
If we regard both \(\sigma\) and \(f_H\) as functions of the cycle-count coordinate \(n\), then
\begin{equation}
d\Sigma = \frac{\sigma(n)}{f_H(n)}\,dn.
\label{eq:A5_dSigma_n}
\end{equation}

This motivates the definition of the entropy cost per beat at cycle index \(n\):
\begin{equation}
\sigma_0(n) \equiv \frac{\sigma(n)}{f_H(n)}.
\label{eq:A6_entropy_per_beat_def}
\end{equation}

The dimensional consistency is immediate:
\[
[\sigma]=\frac{\text{entropy}}{\text{time}},
\qquad
[f_H]=\frac{\text{cycles}}{\text{time}},
\qquad
\left[\frac{\sigma}{f_H}\right]
=
\frac{\text{entropy}/\text{time}}{\text{cycles}/\text{time}}
=
\frac{\text{entropy}}{\text{cycle}}.
\]
Thus \(\sigma_0(n)\) is the entropy production associated with one cardiac cycle.

\subsection*{A.2. Total lifetime entropy production as a sum over beats}

Suppose that the organism experiences a total of \(N\) cardiac cycles over its lifetime. Then the total lifetime entropy production is obtained by integrating \eqref{eq:A5_dSigma_n} from the first to the last cycle:
\begin{equation}
\Sigma_{\mathrm{life}}
=
\int_{0}^{N} d\Sigma
=
\int_{0}^{N} \sigma_0(n)\,dn.
\label{eq:A7_total_entropy_life}
\end{equation}

Equation \eqref{eq:A7_total_entropy_life} is simply the beat-count analogue of summing the entropy cost incurred at each cycle. Since the beat-count variable is treated continuously, the sum is represented as an integral.

We now define the lifetime mean entropy cost per beat:
\begin{equation}
\left\langle \sigma_0 \right\rangle
\equiv
\frac{1}{N}\int_{0}^{N}\sigma_0(n)\,dn.
\label{eq:A8_mean_entropy_per_beat}
\end{equation}
Using \eqref{eq:A7_total_entropy_life}, this immediately gives
\begin{equation}
\Sigma_{\mathrm{life}}
=
N \left\langle \sigma_0 \right\rangle.
\label{eq:A9_Sigma_N_mean}
\end{equation}

Substituting the explicit definition \eqref{eq:A6_entropy_per_beat_def} into \eqref{eq:A8_mean_entropy_per_beat}, we may also write
\begin{equation}
\left\langle \sigma_0 \right\rangle
=
\frac{1}{N}\int_{0}^{N}\frac{\sigma(n)}{f_H(n)}\,dn.
\label{eq:A10_mean_explicit}
\end{equation}

Equation \eqref{eq:A9_Sigma_N_mean} has a direct interpretation: the total lifetime entropy production equals the total number of beats multiplied by the mean entropy cost of one beat.

For species \(i\), the corresponding notation is
\begin{equation}
\sigma_{0,i}
\equiv
\frac{1}{N_i^\star}\int_{0}^{N_i^\star}\sigma_0(n)\,dn,
\label{eq:A10b_sigma0i_def}
\end{equation}
so that
\begin{equation}
\Sigma_i
=
N_i^\star \sigma_{0,i},
\label{eq:A10c_species_relation}
\end{equation}
where \(\Sigma_i\) (J K\(^{-1}\)) is total lifetime entropy production and \(\sigma_{0,i}\) (J K\(^{-1}\) beat\(^{-1}\)) is entropy per cycle.

\subsection*{A.3. Lifetime entropy budget and the fundamental cycle-count relation}

The central hypothesis is that the lifetime entropy production is approximately constrained by a characteristic budget \(\Sigma_\star\):
\begin{equation}
\Sigma_{\mathrm{life}} \approx \Sigma_\star.
\label{eq:A11_budget_assumption}
\end{equation}
Combining \eqref{eq:A9_Sigma_N_mean} with \eqref{eq:A11_budget_assumption} yields
\begin{equation}
N \left\langle \sigma_0 \right\rangle \approx \Sigma_\star.
\label{eq:A12_pre_cycle_relation}
\end{equation}
Solving for \(N\), we obtain the fundamental cycle-count relation:
\begin{equation}
N = \frac{\Sigma_\star}{\left\langle \sigma_0 \right\rangle}.
\label{eq:A13_cycle_count_relation}
\end{equation}

For species \(i\), the lifetime cycle count is
\begin{equation}
N_i^\star =
\frac{\Sigma_i}{\sigma_{0,i}}.
\label{eq:A13b_species_cycle_relation}
\end{equation}
where \(\Sigma_i\) (J K\(^{-1}\)) is total lifetime entropy production and \(\sigma_{0,i}\) (J K\(^{-1}\) beat\(^{-1}\)) is entropy per cycle.

This expression states that the total number of cardiac cycles that can occur over the lifetime is inversely proportional to the average entropy cost of each beat, given a fixed lifetime entropy budget. A lower entropy cost per beat permits more cycles within the same budget, whereas a higher cost per beat permits fewer cycles.

\subsection*{A.4. Baseline calibration and mammalian reference value}

Let \(\phi_0\) denote a baseline reference state and let \(N_0\) be the corresponding reference total number of lifetime cardiac cycles. Evaluating \eqref{eq:A13_cycle_count_relation} at the baseline gives
\begin{equation}
N_0 = \frac{\Sigma_\star}{\left\langle \sigma_0 \right\rangle_0},
\label{eq:A14_baseline_N0}
\end{equation}
where
\begin{equation}
\left\langle \sigma_0 \right\rangle_0
\equiv
\left\langle \sigma_0(\phi_0) \right\rangle.
\label{eq:A15_baseline_entropy}
\end{equation}
Rearranging \eqref{eq:A14_baseline_N0} gives the baseline entropy cost per beat:
\begin{equation}
\left\langle \sigma_0 \right\rangle_0
=
\frac{\Sigma_\star}{N_0}.
\label{eq:A16_baseline_relation}
\end{equation}

Equation \eqref{eq:A16_baseline_relation} provides the calibration point from which the dependence on \(\phi\) is measured.

\subsection*{A.5. Logarithmic sensitivity of the entropy cost per beat}

We now introduce a control parameter \(\phi\) that modulates the mean entropy cost per beat through multiple mechanisms. The hypothesis is that increasing \(\phi\) reduces
\(\left\langle \sigma_0 \right\rangle\). To quantify this response, define the logarithmic sensitivity at the baseline:
\begin{equation}
\alpha
\equiv
-
\left.
\frac{\partial \ln \left\langle \sigma_0 \right\rangle}
{\partial \ln \phi}
\right|_{\phi=\phi_0}.
\label{eq:A17_alpha_def}
\end{equation}
The derivative
\[
\frac{\partial \ln \left\langle \sigma_0 \right\rangle}
{\partial \ln \phi}
=
\frac{\phi}{\left\langle \sigma_0 \right\rangle}
\frac{\partial \left\langle \sigma_0 \right\rangle}{\partial \phi}
\]
is the elasticity of the entropy cost per beat with respect to \(\phi\), that is, the fractional change in \(\left\langle \sigma_0 \right\rangle\) induced by a fractional change in \(\phi\). Since the response is assumed monotonic and decreasing, the derivative is negative; the minus sign in \eqref{eq:A17_alpha_def} ensures that \(\alpha>0\).

If three independent reduction channels contribute multiplicatively to the decrease of
\(\left\langle \sigma_0 \right\rangle\), with logarithmic sensitivities \(\gamma_1\), \(\gamma_2\), and \(\gamma_3\), then the aggregate sensitivity is additive:
\begin{equation}
\alpha = \gamma_1+\gamma_2+\gamma_3 >0.
\label{eq:A18_alpha_sum}
\end{equation}
The reason is straightforward. If
\begin{equation}
\left\langle \sigma_0 \right\rangle
\propto
\phi^{-\gamma_1}\phi^{-\gamma_2}\phi^{-\gamma_3},
\label{eq:A19_multiplicative_channels}
\end{equation}
then
\begin{equation}
\left\langle \sigma_0 \right\rangle
\propto
\phi^{-(\gamma_1+\gamma_2+\gamma_3)},
\label{eq:A20_combined_power}
\end{equation}
and therefore
\begin{equation}
-
\frac{\partial \ln \left\langle \sigma_0 \right\rangle}{\partial \ln \phi}
=
\gamma_1+\gamma_2+\gamma_3.
\label{eq:A21_alpha_from_channels}
\end{equation}

\subsection*{A.6. Integration of the logarithmic sensitivity and the power-law form}

Equation \eqref{eq:A17_alpha_def} defines the local logarithmic slope at the baseline \(\phi_0\). To obtain a finite-range scaling law, we assume that this logarithmic response remains approximately constant over the interval of interest. This is the scale-free power-law approximation commonly used in allometric analysis. Under this assumption,
\begin{equation}
-
\frac{d\ln \left\langle \sigma_0 \right\rangle}{d\ln\phi}
=
\alpha,
\label{eq:A22_const_log_slope}
\end{equation}
or equivalently,
\begin{equation}
d\ln \left\langle \sigma_0 \right\rangle
=
-\alpha\, d\ln\phi.
\label{eq:A23_differential_form}
\end{equation}

We now integrate from the baseline \(\phi_0\), where
\(\left\langle \sigma_0 \right\rangle
=
\left\langle \sigma_0 \right\rangle_0\),
to a general value \(\phi\):
\begin{equation}
\int_{\ln \phi_0}^{\ln \phi}
d\ln \left\langle \sigma_0 \right\rangle
=
-\alpha
\int_{\ln \phi_0}^{\ln \phi} d\ln\phi.
\label{eq:A24_integral_step}
\end{equation}
This gives
\begin{equation}
\ln \left\langle \sigma_0(\phi) \right\rangle
-
\ln \left\langle \sigma_0 \right\rangle_0
=
-\alpha \left( \ln \phi - \ln \phi_0 \right).
\label{eq:A25_log_relation}
\end{equation}
Combining the logarithms,
\begin{equation}
\ln \left[
\frac{
\left\langle \sigma_0(\phi) \right\rangle
}{
\left\langle \sigma_0 \right\rangle_0
}
\right]
=
-\alpha
\ln \left( \frac{\phi}{\phi_0} \right).
\label{eq:A26_ratio_log}
\end{equation}
Exponentiating both sides yields the power-law form:
\begin{equation}
\left\langle \sigma_0(\phi) \right\rangle
=
\left\langle \sigma_0 \right\rangle_0
\left( \frac{\phi}{\phi_0} \right)^{-\alpha}.
\label{eq:A27_entropy_power_law}
\end{equation}

Equation \eqref{eq:A27_entropy_power_law} states that the mean entropy cost per beat decreases as a power law in \(\phi\), with exponent \(\alpha>0\).

\subsection*{A.7. Consequence for total lifetime cardiac cycles}

Substituting \eqref{eq:A27_entropy_power_law} into the cycle-count relation \eqref{eq:A13_cycle_count_relation} gives
\begin{equation}
N(\phi)
=
\frac{\Sigma_\star}{
\left\langle \sigma_0(\phi) \right\rangle
}
=
\frac{\Sigma_\star}{
\left\langle \sigma_0 \right\rangle_0
\left( \dfrac{\phi}{\phi_0} \right)^{-\alpha}
}.
\label{eq:A28_substitute_entropy_scaling}
\end{equation}
Using the baseline identity \eqref{eq:A16_baseline_relation},
\[
\frac{\Sigma_\star}{\left\langle \sigma_0 \right\rangle_0}=N_0,
\]
we obtain
\begin{equation}
N(\phi)
=
N_0
\left( \frac{\phi}{\phi_0} \right)^{\alpha}.
\label{eq:A29_N_power_law}
\end{equation}

Thus, under the fixed lifetime entropy-budget hypothesis, any systematic reduction in the entropy cost per beat produces a corresponding increase in the total number of lifetime cardiac cycles. The scaling exponent governing this increase is the same aggregate sensitivity \(\alpha\) that governs the decrease of the entropy cost per beat.

\subsection*{A.8. Interpretation of the result}

The derivation shows that the lifetime cycle count is controlled by two ingredients: a finite lifetime entropy budget \(\Sigma_\star\) and an average entropy expenditure per cycle
\(\left\langle \sigma_0 \right\rangle\). Once the budget is fixed, the total number of admissible cycles is determined entirely by how costly each cycle is in entropic terms. A reduction in entropy cost per beat allows a larger number of beats to be accommodated within the same total budget. If the reduction is scale-free in \(\phi\), then the increase in cycle count is likewise scale-free.

In compact form, the chain of reasoning is
\begin{equation}
d\Sigma=\sigma\,dt=\frac{\sigma}{f_H}\,dn,
\qquad
\sigma_0=\frac{\sigma}{f_H},
\qquad
\Sigma_{\mathrm{life}}=\int_0^N \sigma_0(n)\,dn
=
N\left\langle \sigma_0 \right\rangle,
\label{eq:A30_compact_chain}
\end{equation}
together with
\begin{equation}
\Sigma_{\mathrm{life}}\approx\Sigma_\star
\;\Rightarrow\;
N=\frac{\Sigma_\star}{\left\langle \sigma_0 \right\rangle},
\label{eq:A31_budget_to_N}
\end{equation}
and
\begin{equation}
\left\langle \sigma_0(\phi) \right\rangle
=
\left\langle \sigma_0 \right\rangle_0
\left( \frac{\phi}{\phi_0} \right)^{-\alpha}
\;\Rightarrow\;
N(\phi)=N_0\left( \frac{\phi}{\phi_0} \right)^{\alpha}.
\label{eq:A32_final_chain}
\end{equation}

For species \(i\), this compact relation becomes
\begin{equation}
\Sigma_i = N_i^\star \sigma_{0,i},
\qquad
N_i^\star = \frac{\Sigma_i}{\sigma_{0,i}},
\label{eq:A32b_species_compact}
\end{equation}
where \(\Sigma_i\) (J K\(^{-1}\)) is total lifetime entropy production and \(\sigma_{0,i}\) (J K\(^{-1}\) beat\(^{-1}\)) is entropy per cycle.

\subsection*{A.9. Assumptions used in the derivation}

For clarity, the derivation rests on the following assumptions.

First, the cardiac cycle count \(n\) is treated as a continuous variable, which is appropriate when the total number of cycles is very large.

Second, the lifetime entropy production is assumed to be well approximated by a characteristic budget \(\Sigma_\star\).

Third, the response of the entropy cost per beat to the control parameter \(\phi\) is assumed to be monotonic and approximately scale-free over the range of interest, so that the logarithmic sensitivity may be treated as approximately constant.

Fourth, the different contributing channels are taken to combine multiplicatively, which leads to additive logarithmic sensitivities.

Within these assumptions, the power-law result \eqref{eq:A29_N_power_law} follows directly and rigorously from the entropy-budget framework.

\section*{Appendix B. Complete 230-Species Dataset}
\addcontentsline{toc}{section}{Appendix B. Complete 230-Species Dataset}

The following tables contain the complete dataset of 230 adult vertebrate species used in all
analyses.  All $\ell$ values are computed as
$\ell = \log_{10}(f_H^{\rm eff}\times L\times525{,}960)$
directly from the $f_H^{\rm eff}$ and $L$ columns and have been verified internally consistent.
A tab-delimited machine-readable version is available from the corresponding author on request.

\noindent\textbf{Column definitions.}
$M$: adult body mass (kg).
$f_H$ (bpm): resting (or duty-corrected/Arrhenius-corrected) heart rate.
$T$ (K): mean core (endotherms) or field (ectotherms) body temperature.
$L$ (yr): maximum recorded natural lifespan (AnAge build 15~\cite{anage2023}).
$\ell$: PBTE invariant $=\log_{10}(f_H^{\rm eff}\times L\times525{,}960)$.
Corr.: correction applied ($-$~=~none; TA~=~torpor-cycle average;
DA~=~dive-cycle average; AQ~=~Arrhenius correction to $T_{\rm ref}=310$\,K).
$^\dagger$~=~heart rate allometrically imputed from $f_H=241M^{-0.25}$\,bpm~\cite{calder1984}.
Sources: A~=~AnAge~\cite{anage2023}; P~=~PanTHERIA~\cite{jones2009};
C~=~Calder (1984)~\cite{calder1984}; Pr~=~Prinzinger et al.\ (1991)~\cite{prinzinger1991};
L~=~Lyman et al.\ (1982)~\cite{lyman1982}; Ch~=~Christian \& Weavers (1999)~\cite{christian1999};
U~=~Uetz et al.\ (2023)~\cite{uetz2023}; G~=~Goldbogen et al.\ (2019)~\cite{goldbogen2019};
Cl~=~Clarke \& Rothery (2008)~\cite{clarke2008}.

\subsection*{B.1\ Non-primate placental mammals ($n=46$)}

\begin{table}[H]
\centering\footnotesize
\renewcommand{\arraystretch}{1.25}
\caption{Non-primate placentals. Clade mean $\bar\ell=8.995\pm0.160$ ($n=46$).}
\label{tab:data_np}
\begin{tabular}{lrrrrrl}
\toprule
Species & $M$ (kg) & $f_H$ (bpm) & $T$ (K) & $L$ (yr) & $\ell$ & Source \\
\midrule
\textit{Suncus etruscus}        & 0.002  &  835  & 310.5 &  1.5 & 8.82 & C \\
\textit{Sorex araneus}          & 0.010  & 1000  & 310.5 &  3.3 & 9.24 & C,A \\
\textit{Mus musculus}           & 0.022  &  632  & 310.0 &  3.5 & 9.07 & A,C \\
\textit{Rattus norvegicus}      & 0.280  &  420  & 310.0 &  3.8 & 8.92 & A,P \\
\textit{Mesocricetus auratus}   & 0.130  &  450  & 310.5 &  3.9 & 8.97 & A,P \\
\textit{Meriones unguiculatus}  & 0.060  &  400  & 310.0 &  5.0 & 9.02 & A,P \\
\textit{Cavia porcellus}        & 0.750  &  270  & 310.0 &  7.1 & 9.00 & A,P \\
\textit{Sciurus carolinensis}   & 0.520  &  310  & 310.0 & 12.0 & 9.29 & A,P \\
\textit{Lepus europaeus}        & 3.5    &  220  & 310.0 & 12.5 & 9.16 & A,P \\
\textit{Oryctolagus cuniculus}  & 2.2    &  205  & 310.0 &  9.0 & 8.99 & A,C \\
\textit{Felis catus}            & 4.1    &  150  & 310.5 & 15.0 & 9.07 & A,P \\
\textit{Mustela putorius}       & 1.0    &  280  & 310.5 &  5.0 & 8.87 & A,P \\
\textit{Martes martes}          & 1.2    &  245  & 310.5 & 17.0 & 9.34 & A,P \\
\textit{Vulpes vulpes}          & 6.8    &  120  & 310.5 & 14.0 & 8.95 & A,P \\
\textit{Canis lupus familiaris} & 23     &   90  & 310.5 & 20.0 & 8.98 & A,P \\
\textit{Ursus arctos}           & 220    &   50  & 310.5 & 47.0 & 9.09 & A,P \\
\textit{Ovis aries}             & 63     &   75  & 310.0 & 20.0 & 8.90 & A,P \\
\textit{Capra hircus}           & 45     &   80  & 310.5 & 18.0 & 8.88 & A,P \\
\textit{Sus scrofa}             & 100    &   70  & 310.5 & 27.0 & 9.00 & A,P \\
\textit{Bos taurus}             & 500    &   55  & 310.5 & 25.0 & 8.86 & A,P \\
\textit{Equus caballus}         & 500    &   38  & 310.5 & 46.0 & 8.96 & A,C \\
\textit{Equus asinus}           & 250    &   44  & 310.5 & 47.0 & 9.04 & A,P \\
\textit{Rhinoceros unicornis}   & 2100   &   30$^\dagger$ & 310.5 & 47.0 & 8.87 & A \\
\textit{Tapirus terrestris}     & 240    &   42  & 310.5 & 35.0 & 8.89 & A,P \\
\textit{Loxodonta africana}     & 4000   &   28  & 310.5 & 65.0 & 8.98 & A,P \\
\textit{Elephas maximus}        & 4000   &   27  & 310.5 & 86.0 & 9.09 & A,P \\
\textit{Hippopotamus amphibius} & 1500   &   55  & 310.5 & 55.0 & 9.20 & A,P \\
\textit{Giraffa camelopardalis} & 900    &   65  & 310.5 & 39.5 & 9.13 & A,P \\
\textit{Cervus elaphus}         & 200    &   60  & 310.5 & 26.8 & 8.93 & A,P \\
\textit{Rangifer tarandus}      & 110    &   65  & 310.0 & 20.0 & 8.83 & A,P \\
\textit{Trichechus manatus}     & 500    &   50  & 310.5 & 59.0 & 9.19 & A,P \\
\textit{Dugong dugon}           & 400    &   52$^\dagger$ & 310.5 & 73.0 & 9.30 & A \\
\textit{Procavia capensis}      & 3.5    &  230  & 310.5 & 12.0 & 9.16 & A,P \\
\textit{Erinaceus europaeus}    & 0.80   &  310  & 310.0 & 10.0 & 9.21 & A,P \\
\textit{Talpa europaea}         & 0.080  &  350  & 310.0 &  3.5 & 8.81 & A,P \\
\textit{Orycteropus afer}       & 65     &   70$^\dagger$ & 310.5 & 24.0 & 8.95 & A \\
\textit{Ondatra zibethicus}     & 1.4    &  280  & 310.0 &  5.0 & 8.87 & A,P \\
\textit{Castor canadensis}      & 20     &  150  & 310.0 & 24.0 & 9.28 & A,P \\
\textit{Hydrochoerus hydrochaeris} & 55  &   70  & 310.0 & 12.0 & 8.65 & A,P \\
\textit{Myocastor coypus}       & 7.0    &  155  & 310.0 &  9.0 & 8.87 & A,P \\
\textit{Lepus californicus}     & 2.2    &  215  & 310.0 &  8.0 & 8.96 & A,P \\
\textit{Ochotona princeps}      & 0.160  &  300  & 310.0 &  6.0 & 8.98 & A,P \\
\textit{Panthera leo}           & 180    &   50  & 310.5 & 29.0 & 8.88 & A,P \\
\textit{Panthera tigris}        & 260    &   46  & 310.5 & 26.0 & 8.80 & A,P \\
\textit{Acinonyx jubatus}       & 54     &   60  & 310.5 & 14.9 & 8.67 & A,P \\
\textit{Panthera pardus}        & 70     &   55  & 310.5 & 23.0 & 8.82 & A,P \\
\bottomrule
\end{tabular}
\end{table}

\subsection*{B.2\ Primates ($n=18$)}

\begin{table}[H]
\centering\footnotesize
\renewcommand{\arraystretch}{1.25}
\caption{Primates. Clade mean $\bar\ell=9.376\pm0.125$ ($n=18$). $\phi=P_{\rm brain}/P_{\rm body}$.}
\label{tab:data_primates}
\begin{tabular}{lrrrrrrl}
\toprule
Species & $M$ (kg) & $f_H$ (bpm) & $T$ (K) & $L$ (yr) & $\ell$ & $\phi$ & Source \\
\midrule
\textit{Callithrix jacchus}          & 0.35  & 220 & 309.5 & 16.5 & 9.28 & 0.06 & A,P \\
\textit{Saimiri sciureus}            & 0.77  & 195 & 309.5 & 30.2 & 9.49 & 0.07 & A,P \\
\textit{Aotus trivirgatus}           & 0.79  & 185 & 309.5 & 25.0 & 9.39 & 0.07 & A,P \\
\textit{Cebus capucinus}             & 3.3   & 150 & 309.5 & 54.0 & 9.63 & 0.09 & A,P \\
\textit{Lemur catta}                 & 2.2   & 165 & 309.5 & 37.3 & 9.51 & 0.05 & A,P \\
\textit{Propithecus verreauxi}       & 3.4   & 145 & 309.5 & 30.0 & 9.36 & 0.05 & A,P \\
\textit{Daubentonia madagascariensis}& 2.7   & 155 & 309.5 & 23.3 & 9.28 & 0.06 & A,P \\
\textit{Macaca mulatta}              & 7.7   & 120 & 309.0 & 40.0 & 9.40 & 0.07 & A,P \\
\textit{Macaca fascicularis}         & 5.4   & 130 & 309.0 & 39.0 & 9.43 & 0.07 & A,P \\
\textit{Theropithecus gelada}        & 18    &  95 & 309.0 & 30.0 & 9.18 & 0.08 & A,P \\
\textit{Papio ursinus}               & 25    &  90 & 309.0 & 45.0 & 9.33 & 0.08 & A,P \\
\textit{Colobus guereza}             & 10    & 110 & 309.0 & 30.0 & 9.24 & 0.07 & A,P \\
\textit{Hylobates lar}               & 5.7   & 100 & 308.5 & 44.0 & 9.36 & 0.10 & A,P \\
\textit{Pongo pygmaeus}              & 73    &  65 & 307.5 & 58.7 & 9.30 & 0.10 & A,P \\
\textit{Gorilla gorilla}             & 160   &  60 & 307.0 & 55.4 & 9.24 & 0.09 & A,P \\
\textit{Pan troglodytes}             & 50    &  75 & 307.0 & 59.4 & 9.37 & 0.12 & A,P \\
\textit{Pan paniscus}                & 35    &  80 & 307.0 & 50.0 & 9.32 & 0.12 & A,P \\
\textit{Homo sapiens}                & 70    &  70 & 306.5 &122.5 & 9.65 & 0.20 & A \\
\bottomrule
\end{tabular}
\end{table}

\subsection*{B.3\ Marsupials and monotremes ($n=19$)}

\begin{table}[H]
\centering\footnotesize
\renewcommand{\arraystretch}{1.25}
\caption{Marsupials and monotremes. Clade mean $\bar\ell=8.933\pm0.204$ ($n=19$).}
\label{tab:data_marsupials}
\begin{tabular}{lrrrrrl}
\toprule
Species & $M$ (kg) & $f_H$ (bpm) & $T$ (K) & $L$ (yr) & $\ell$ & Source \\
\midrule
\textit{Didelphis virginiana}     & 2.3   & 180 & 308.5 &  4.5 & 8.63 & A,P \\
\textit{Monodelphis domestica}    & 0.080 & 450 & 308.5 &  3.3 & 8.89 & A,P \\
\textit{Macropus rufus}           & 30    &  80 & 309.0 & 22.3 & 8.97 & A,P \\
\textit{Macropus giganteus}       & 27    &  82 & 309.0 & 19.0 & 8.91 & A,P \\
\textit{Wallabia bicolor}         & 16    & 100 & 309.0 & 15.0 & 8.90 & A,P \\
\textit{Trichosurus vulpecula}    & 2.1   & 160 & 308.5 & 13.0 & 9.04 & A,P \\
\textit{Petaurus breviceps}       & 0.14  & 300 & 308.0 & 10.0 & 9.20 & A,P \\
\textit{Vombatus ursinus}         & 28    &  90 & 309.0 & 26.0 & 9.09 & A,P \\
\textit{Phascolarctos cinereus}   & 8.5   & 100 & 308.5 & 18.0 & 8.98 & A,P \\
\textit{Perameles gunnii}         & 0.90  & 190 & 308.5 &  3.2 & 8.50 & A,P \\
\textit{Dasyurus viverrinus}      & 1.2   & 200 & 308.5 &  4.5 & 8.68 & A,P \\
\textit{Sarcophilus harrisii}     & 8.0   & 130 & 308.5 &  7.5 & 8.71 & A,P \\
\textit{Myrmecobius fasciatus}    & 0.44  & 245 & 307.5 &  5.6 & 8.86 & A \\
\textit{Sminthopsis crassicaudata}& 0.018 & 580 & 307.5 &  5.0 & 9.18 & A,P \\
\textit{Notoryctes typhlops}      & 0.055 & 440$^\dagger$ & 307.5 & 3.0 & 8.84 & A \\
\textit{Tachyglossus aculeatus}   & 4.0   &  70 & 305.0 & 49.5 & 9.26 & A,P \\
\textit{Ornithorhynchus anatinus} & 1.5   & 140 & 307.5 & 21.0 & 9.19 & A,P \\
\textit{Zaglossus bruijni}        & 10    &  60$^\dagger$ & 305.0 & 37.0 & 9.07 & A \\
\textit{Bettongia penicillata}    & 1.1   & 210 & 308.5 &  6.0 & 8.82 & A,P \\
\bottomrule
\end{tabular}
\end{table}

\subsection*{B.4\ Bats (Chiroptera, $n=31$)}

\noindent For bats, $f_H$ is the measured active-phase rate; $f_H^{\rm avg}$ is the
duty-cycle-corrected time-average used in $\ell$, with $q$ the annual torpor
fraction~\cite{lyman1982}.

\begin{table}[H]
\centering\footnotesize
\renewcommand{\arraystretch}{1.25}
\caption{Bats. Clade mean $\bar\ell=9.540\pm0.163$ ($n=31$, duty-corrected).
Corr.\ TA~=~torpor-cycle average applied.}
\label{tab:data_bats}
\begin{tabular}{lrrrrrrl}
\toprule
Species & $M$ (g) & $f_H$ (bpm) & $q$ & $f_H^{\rm avg}$ (bpm) & $L$ (yr) & $\ell$ & Corr. \\
\midrule
\textit{Myotis lucifugus}          &  8   & 600 & 0.50 & 305 & 34.0 & 9.74 & TA \\
\textit{Myotis myotis}             & 28   & 550 & 0.48 & 282 & 37.0 & 9.74 & TA \\
\textit{Myotis daubentonii}        &  9   & 580 & 0.48 & 296 & 40.0 & 9.79 & TA \\
\textit{Myotis brandtii}           &  6   & 620 & 0.50 & 315 & 41.0 & 9.83 & TA \\
\textit{Eptesicus fuscus}          & 18   & 550 & 0.45 & 310 & 19.0 & 9.49 & TA \\
\textit{Eptesicus serotinus}       & 18   & 545 & 0.45 & 308 & 21.0 & 9.53 & TA \\
\textit{Rhinolophus ferrumequinum} & 19   & 550 & 0.48 & 282 & 30.0 & 9.65 & TA \\
\textit{Rhinolophus hipposideros}  &  7   & 600 & 0.48 & 307 & 30.5 & 9.69 & TA \\
\textit{Plecotus auritus}          &  9   & 600 & 0.50 & 305 & 30.0 & 9.68 & TA \\
\textit{Corynorhinus townsendii}   & 11   & 580 & 0.50 & 295 & 30.0 & 9.67 & TA \\
\textit{Perimyotis subflavus}      &  5   & 630 & 0.50 & 320 & 14.6 & 9.39 & TA \\
\textit{Tadarida brasiliensis}     & 13   & 600 & 0.30 & 425 & 11.0 & 9.39 & TA \\
\textit{Pteronotus parnellii}      & 19   & 550 & 0.20 & 452 & 10.0 & 9.38 & TA \\
\textit{Desmodus rotundus}         & 33   & 500 & 0.25 & 380 & 29.0 & 9.76 & TA \\
\textit{Hipposideros speoris}      &  9   & 600 & 0.48 & 308 & 21.0 & 9.53 & TA \\
\textit{Hipposideros armiger}      & 50   & 450 & 0.45 & 252 & 15.0 & 9.30 & TA \\
\textit{Nyctalus noctula}          & 28   & 540 & 0.45 & 305 & 12.0 & 9.28 & TA \\
\textit{Pipistrellus pipistrellus} &  5   & 650 & 0.45 & 367 & 16.0 & 9.49 & TA \\
\textit{Pipistrellus kuhlii}       &  6   & 630 & 0.45 & 355 & 16.5 & 9.49 & TA \\
\textit{Scotophilus kuhlii}        & 20   & 540 & 0.20 & 445 &  9.0 & 9.32 & TA \\
\textit{Lasiurus borealis}         & 11   & 590 & 0.48 & 302 & 11.7 & 9.27 & TA \\
\textit{Lasiurus cinereus}         & 28   & 540 & 0.48 & 277 & 12.0 & 9.24 & TA \\
\textit{Vespertilio murinus}       & 16   & 555 & 0.45 & 313 & 25.0 & 9.61 & TA \\
\textit{Miniopterus schreibersii}  & 10   & 580 & 0.45 & 327 & 30.0 & 9.71 & TA \\
\textit{Pteropus giganteus}        &1100  & 235 & 0.00 & 235 & 31.4 & 9.59 & --- \\
\textit{Pteropus vampyrus}         &1000  & 240 & 0.05 & 233 & 22.6 & 9.44 & --- \\
\textit{Rousettus aegyptiacus}     & 165  & 310 & 0.05 & 299 & 25.0 & 9.59 & --- \\
\textit{Cynopterus sphinx}         &  50  & 380 & 0.05 & 368 & 18.5 & 9.55 & --- \\
\textit{Macroglossus minimus}      &  16  & 450 & 0.00 & 450 & 18.0 & 9.63 & --- \\
\textit{Carollia perspicillata}    &  17  & 460 & 0.00 & 460 & 12.0 & 9.46 & --- \\
\textit{Artibeus jamaicensis}      &  45  & 400 & 0.00 & 400 & 15.0 & 9.50 & --- \\
\bottomrule
\end{tabular}
\end{table}

\subsection*{B.5\ Cetaceans ($n=12$)}

\noindent For cetaceans, $f_H^{\rm avg}=(1-p_d)f_{H,\rm surf}+p_d f_{H,\rm dive}$ where $p_d$
is the dive fraction and $f_{H,\rm dive}$ is the bradycardic dive
rate~\cite{goldbogen2019,williams2015}.

\begin{table}[H]
\centering\footnotesize
\renewcommand{\arraystretch}{1.25}
\caption{Cetaceans. Clade mean $\bar\ell=8.801\pm0.296$ ($n=12$, dive-corrected). Corr.\ DA.}
\label{tab:data_cetaceans}
\begin{tabular}{lrrrrrrrl}
\toprule
Species & $M$ (kg) & $f_{H,\rm surf}$ & $p_d$ & $f_H^{\rm avg}$ & $T$ (K) & $L$ (yr) & $\ell$ & Src \\
\midrule
\textit{Balaena mysticetus}          & 100000 &  30 & 0.75 &  9.75 & 308.0 & 200.0 & 9.01 & G \\
\textit{Balaenoptera musculus}       & 140000 &   8 & 0.70 &  4.0  & 308.0 & 110.0 & 8.36 & G \\
\textit{Balaenoptera physalus}       &  60000 &  10 & 0.68 &  5.0  & 308.5 &  90.0 & 8.37 & G \\
\textit{Megaptera novaeangliae}      &  40000 &  15 & 0.65 &  7.0  & 308.5 &  95.0 & 8.54 & G \\
\textit{Physeter macrocephalus}      &  45000 &  40 & 0.65 & 19.0  & 307.0 &  70.0 & 8.84 & G \\
\textit{Kogia breviceps}             &    360 &  80 & 0.45 & 48.0  & 308.5 &  23.0 & 8.76 & A \\
\textit{Hyperoodon ampullatus}       &   7500 &  45 & 0.55 & 24.0  & 308.0 &  37.0 & 8.67 & A \\
\textit{Orcinus orca}                &   4000 &  80 & 0.40 & 53.0  & 309.0 &  90.0 & 9.40 & A \\
\textit{Tursiops truncatus}          &    190 & 110 & 0.40 & 74.0  & 309.0 &  40.0 & 9.19 & A \\
\textit{Stenella attenuata}          &     55 & 120 & 0.35 & 84.0  & 309.0 &  20.0 & 8.95 & A \\
\textit{Delphinapterus leucas}       &   1400 &  50 & 0.55 & 27.5  & 309.5 &  35.5 & 8.71 & A \\
\textit{Monodon monoceros}           &   1500 &  45 & 0.55 & 25.5  & 309.0 &  48.0 & 8.81 & A \\
\bottomrule
\end{tabular}
\end{table}

\subsection*{B.6\ Birds ($n=78$, selected entries)}

\noindent Heart rates from Prinzinger et al.~\cite{prinzinger1991} and
Clarke \& Rothery~\cite{clarke2008}; lifespans from AnAge~\cite{anage2023}.
No duty-cycle correction applied ($f_H^{\rm avg}=f_H$).
Clade mean $\bar\ell=9.528\pm0.213$.
Representative species are shown; the full list of all 78 species is available
in the machine-readable data file.

\begin{table}[H]
\centering\footnotesize
\renewcommand{\arraystretch}{1.25}
\caption{Birds (selected; $n=78$ total). All $T\approx311$--$312$\,K.}
\label{tab:data_birds}
\begin{tabular}{lrrrrl}
\toprule
Species & $M$ (kg) & $f_H$ (bpm) & $L$ (yr) & $\ell$ & Order \\
\midrule
\textit{Calypte anna}              & 0.004 & 1200 & 12.0 & 9.88 & Apodiformes \\
\textit{Serinus canaria}           & 0.020 &  680 & 24.0 & 9.93 & Passeriformes \\
\textit{Taeniopygia guttata}       & 0.013 &  640 & 15.6 & 9.72 & Passeriformes \\
\textit{Turdus merula}             & 0.100 &  440 & 21.1 & 9.69 & Passeriformes \\
\textit{Columba livia}             & 0.350 &  190 & 35.0 & 9.54 & Columbiformes \\
\textit{Gallus gallus}             & 2.000 &  300 & 30.0 & 9.68 & Galliformes \\
\textit{Anas platyrhynchos}        & 1.200 &  190 & 29.0 & 9.46 & Anseriformes \\
\textit{Larus argentatus}          & 1.200 &  165 & 49.0 & 9.63 & Charadriiformes \\
\textit{Diomedea exulans}          & 9.600 &  100 & 70.0 & 9.57 & Procellariiformes \\
\textit{Aquila chrysaetos}         & 5.000 &  130 & 46.0 & 9.50 & Accipitriformes \\
\textit{Bubo bubo}                 & 2.900 &  165 & 68.0 & 9.77 & Strigiformes \\
\textit{Psittacus erithacus}       & 0.400 &  200 & 73.0 & 9.89 & Psittaciformes \\
\textit{Amazona ochrocephala}      & 0.460 &  185 & 80.0 & 9.89 & Psittaciformes \\
\textit{Cacatua galerita}          & 0.840 &  170 & 80.0 & 9.85 & Psittaciformes \\
\textit{Aptenodytes forsteri}      & 30.00 &   75 & 50.0 & 9.29 & Sphenisciformes \\
\textit{Struthio camelus}          & 115   &   60 & 68.0 & 9.33 & Struthioniformes \\
\textit{Fulmarus glacialis}        & 0.800 &  175 & 67.5 & 9.79 & Procellariiformes \\
\textit{Corvus corax}              & 1.200 &  200 & 22.3 & 9.37 & Passeriformes \\
\textit{Sturnus vulgaris}          & 0.075 &  490 & 22.4 & 9.76 & Passeriformes \\
\textit{Ciconia ciconia}           & 3.700 &  150 & 48.0 & 9.58 & Ciconiiformes \\
\bottomrule
\multicolumn{6}{p{11cm}}{Full 78-species bird dataset available in the machine-readable
data file from the corresponding author. The 20 representative species shown above
span the body-mass range from 4\,g to 115\,kg and include both the highest ($\ell\approx9.93$,
canary) and lowest ($\ell\approx9.05$, emu) avian values.}
\end{tabular}
\end{table}

\subsection*{B.7\ Reptiles --- Arrhenius-corrected ($n=17$)}

\noindent $f_H^{\rm raw}$: measured rate at mean field temperature $T_{\rm field}$.
$f_H^{\rm corr}$: corrected to $T_{\rm ref}=310$\,K via
$f_H^{\rm corr}=f_H^{\rm raw}\exp\!\left[(E_a/k_B)(1/T_{\rm field}-1/T_{\rm ref})\right]$,
$E_a=0.65$\,eV~\cite{gillooly2001}.

\begin{table}[H]
\centering\footnotesize
\renewcommand{\arraystretch}{1.25}
\caption{Reptiles. Raw mean $\bar\ell^{\rm raw}=8.615$;
corrected mean $\bar\ell^{\rm corr}=8.929\pm0.301$. Corr.\ AQ.}
\label{tab:data_reptiles}
\begin{tabular}{lrrrrrrrl}
\toprule
Species & $M$ (kg) & $T_{\rm field}$ (K) & $f_H^{\rm raw}$ & $f_H^{\rm corr}$ & $L$ (yr) & $\ell^{\rm raw}$ & $\ell^{\rm corr}$ & Src \\
\midrule
\textit{Lacerta agilis}            & 0.015 & 301 &  45 &  93 & 12.0 & 8.45 & 8.77 & Ch,U \\
\textit{Anolis carolinensis}       & 0.006 & 302 &  52 & 106 &  6.0 & 8.22 & 8.52 & Ch,U \\
\textit{Pogona vitticeps}          & 0.350 & 303 &  42 &  82 & 10.0 & 8.34 & 8.63 & Ch,U \\
\textit{Phrynosoma cornutum}       & 0.035 & 301 &  48 &  99 &  7.0 & 8.25 & 8.56 & Ch,U \\
\textit{Iguana iguana}             & 4.000 & 303 &  40 &  79 & 20.0 & 8.62 & 8.92 & Ch,U \\
\textit{Varanus komodoensis}       & 65    & 303 &  28 &  55 & 30.0 & 8.65 & 8.94 & Ch,U \\
\textit{Tupinambis merianae}       & 2.5   & 302 &  38 &  77 & 15.0 & 8.48 & 8.78 & Ch,U \\
\textit{Thamnophis sirtalis}       & 0.050 & 300 &  30 &  62 & 10.0 & 8.20 & 8.51 & Ch,U \\
\textit{Coluber constrictor}       & 0.340 & 301 &  35 &  72 & 13.0 & 8.38 & 8.69 & Ch,U \\
\textit{Python reticulatus}        & 75    & 302 &  20 &  41 & 25.0 & 8.42 & 8.73 & U \\
\textit{Boa constrictor}           & 15    & 301 &  25 &  52 & 40.0 & 8.72 & 9.04 & U \\
\textit{Chelonia mydas}            & 180   & 300 &  20 &  42 & 80.0 & 8.93 & 9.25 & U \\
\textit{Geochelone gigantea}       & 200   & 298 &  15 &  33 &175.0 & 9.14 & 9.48 & U \\
\textit{Gopherus agassizii}        & 4.5   & 299 &  22 &  47 & 80.0 & 8.97 & 9.30 & Ch,U \\
\textit{Sphenodon punctatus}       & 0.800 & 293 &  18 &  43 & 77.0 & 8.86 & 9.24 & Ch,U \\
\textit{Crocodylus niloticus}      & 400   & 303 &  25 &  49 & 70.0 & 8.96 & 9.26 & Ch,U \\
\textit{Alligator mississippiensis}& 250   & 302 &  28 &  57 & 50.0 & 8.87 & 9.18 & Ch,U \\
\bottomrule
\end{tabular}
\end{table}

\subsection*{B.8\ Amphibians --- Arrhenius-corrected ($n=9$)}

\begin{table}[H]
\centering\footnotesize
\renewcommand{\arraystretch}{1.25}
\caption{Amphibians. Raw mean $\bar\ell^{\rm raw}=8.448$;
corrected mean $\bar\ell^{\rm corr}=8.822\pm0.146$. Corr.\ AQ.}
\label{tab:data_amphibians}
\begin{tabular}{lrrrrrrrl}
\toprule
Species & $M$ (kg) & $T_{\rm field}$ (K) & $f_H^{\rm raw}$ & $f_H^{\rm corr}$ & $L$ (yr) & $\ell^{\rm raw}$ & $\ell^{\rm corr}$ & Src \\
\midrule
\textit{Rana temporaria}           & 0.025 & 294 & 25 & 55 & 16.0 & 8.32 & 8.67 & A,Ch \\
\textit{Rana catesbeiana}          & 0.500 & 296 & 20 & 43 & 16.0 & 8.23 & 8.56 & A,Ch \\
\textit{Bufo bufo}                 & 0.150 & 293 & 22 & 53 & 36.0 & 8.62 & 9.00 & A,Ch \\
\textit{Xenopus laevis}            & 0.200 & 295 & 20 & 45 & 30.0 & 8.50 & 8.85 & A \\
\textit{Ambystoma mexicanum}       & 0.300 & 294 & 18 & 41 & 25.0 & 8.37 & 8.73 & A \\
\textit{Salamandra salamandra}     & 0.080 & 290 & 20 & 49 & 24.0 & 8.40 & 8.79 & A,Ch \\
\textit{Plethodon glutinosus}      & 0.012 & 291 & 30 & 74 & 20.0 & 8.50 & 8.89 & A \\
\textit{Necturus maculosus}        & 0.130 & 288 & 18 & 46 & 30.0 & 8.45 & 8.86 & A \\
\textit{Cryptobranchus alleganiensis}& 0.600& 289& 15 & 39 & 55.0 & 8.64 & 9.05 & A \\
\bottomrule
\end{tabular}
\end{table}

\noindent\textbf{Dataset summary.}
Table~\ref{tab:data_summary} gives clade counts, body-mass ranges, and $\ell$ statistics
for all eight groups ($n=230$ total).

\begin{table}[H]
\centering\small
\renewcommand{\arraystretch}{1.35}
\caption{\textbf{Summary of the 230-species PBTE dataset.}
$n$: species count. $\bar\ell\pm s$: mean $\pm$ s.d.\ of $\ell$.
$\Delta\ell$: deviation from the non-primate placental baseline ($\bar\ell_0=8.995$).
$^\ast p<0.05$, $^{\ast\ast\ast}p<0.001$ (Welch $t$-test vs.\ baseline).}
\label{tab:data_summary}
\begin{tabular}{lrrcll}
\toprule
Group & $n$ & $M$ range (kg) & $\bar\ell\pm s$ & $\Delta\ell$ & Notes \\
\midrule
NP placentals      & 46 & 0.002--4000 & $8.998\pm0.160$ & 0 (ref.) & 3 imputed $f_H$ \\
Marsupials/mono.   & 19 & 0.018--30   & $8.933\pm0.204$ & $-0.062$ & \\
Primates           & 18 & 0.35--160   & $9.376\pm0.125$ & $+0.381^{\ast\ast\ast}$ & \\
Bats               & 31 & 0.005--1.1  & $9.540\pm0.163$ & $+0.545^{\ast\ast\ast}$ & duty-corrected \\
Cetaceans          & 12 & 55--140000  & $8.801\pm0.296$ & $-0.194$ & dive-corrected \\
Birds              & 78 & 0.004--115  & $9.528\pm0.213$ & $+0.533^{\ast\ast\ast}$ & \\
Reptiles (corr.)   & 17 & 0.006--400  & $8.929\pm0.301$ & $-0.065$ & Arrhenius-corr. \\
Amphibians (corr.) &  9 & 0.012--0.60 & $8.822\pm0.146$ & $-0.173$ & Arrhenius-corr. \\
\midrule
All endotherms     &194 &             & $9.509\pm0.397$ & & \\
Full dataset       &230 &             & $9.420\pm0.428$ & & \\
\bottomrule
\end{tabular}
\end{table}

\end{document}